\documentclass[12 pt]{amsart}
\usepackage{amssymb,latexsym,amsmath,amscd,amsthm,graphicx, color}
\usepackage[all]{xy}
\usepackage{pgf,tikz}
\usepackage{mathrsfs}
\usepackage{cite}
\usetikzlibrary{arrows}
\definecolor{uuuuuu}{rgb}{0.26666666666666666,0.26666666666666666,0.26666666666666666}
\definecolor{xdxdff}{rgb}{0.49019607843137253,0.49019607843137253,1.}
\definecolor{ffqqqq}{rgb}{1.,0.,0.}

\raggedbottom

\pagestyle{empty}

\definecolor{uuuuuu}{rgb}{0.26666666666666666,0.26666666666666666,0.26666666666666666}
\definecolor{qqwuqq}{rgb}{0.,0.39215686274509803,0.}
\definecolor{zzttqq}{rgb}{0.6,0.2,0.}
\definecolor{xdxdff}{rgb}{0.49019607843137253,0.49019607843137253,1.}
\definecolor{qqqqff}{rgb}{0.,0.,1.}
\definecolor{cqcqcq}{rgb}{0.7529411764705882,0.7529411764705882,0.7529411764705882}

%

\setlength{\oddsidemargin}{0 in} \setlength{\evensidemargin}{0 in}
\setlength{\textwidth}{6.75 in} \setlength{\topmargin}{-.6 in}
\setlength{\headheight}{.00 in} \setlength{\headsep}{.3 in }
\setlength{\textheight}{10 in} \setlength{\footskip}{0 in}

\raggedbottom
\pagestyle{empty}
\theoremstyle{plain}

\newtheorem{cor}[subsection]{Corollary}
\newtheorem{conj}[subsection]{Conjecture}

\newtheorem{theorem}[subsection]{Theorem}

\newtheorem{lemma}[subsection]{Lemma}

\theoremstyle{definition}
\newtheorem{prop}[subsection]{Proposition}

\newtheorem{note}[subsection]{Note}



\newcommand{\sci}{\subset}
\newcommand{\set}[1]{\{#1\}}


\newcommand{\ga}{\alpha}
\newcommand{\gb}{\beta}

\newcommand{\gd}{\delta}
\renewcommand{\gg}{\gamma}

\newcommand{\gl}{\lambda}

\newcommand{\tbf}{\textbf}
\newcommand{\tit}{\textit}

\newcommand{\D}[1]{\mathbb{#1}}

\newcommand{\te}{\text}


\newcommand{\tri}{\triangle}
\begin{document}

\tbf{To appear, Real Analysis Exchange}
\title{Quantization for uniform distributions on equilateral triangles}

\author{Carl P. Dettmann}
\address{University of Bristol\\
School of Mathematics\\
University Walk\\
Bristol BS8 1TW\\
UK.}
\email{Carl.Dettmann@bris.ac.uk}

\author{ Mrinal Kanti Roychowdhury}
\address{School of Mathematical and Statistical Sciences\\
University of Texas Rio Grande Valley\\
1201 West University Drive\\
Edinburg, TX 78539-2999, USA.}
\email{mrinal.roychowdhury@utrgv.edu}

\subjclass[2010]{60Exx, 94A34.}
\keywords{Uniform distributions, optimal sets, quantization error}
\thanks{The research of the first author was supported by the Engineering and Physical Sciences Research Council (EPSRC) Grant grant EP/N002458/1, and that of the second author was supported by U.S. National Security Agency (NSA) Grant H98230-14-1-0320}

\date{}
\maketitle

\pagestyle{myheadings}\markboth{Carl P. Dettmann and Mrinal Kanti Roychowdhury}{Quantization for uniform distributions on equilateral triangles}

\begin{abstract}
We approximate the uniform measure on an equilateral triangle by a measure supported on $n$ points.  We find the optimal sets of points ($n$-means) and
corresponding approximation (quantization) error for $n\leq4$, give numerical optimization results for $n\leq 21$, and a bound on the quantization error for
$n\to\infty$.  The equilateral triangle has particularly efficient quantizations due to its connection with the triangular lattice.  Our methods can be applied to the uniform distributions on general sets with piecewise smooth boundaries.
\end{abstract}

\section{Introduction}
The representation of a given quantity with less information is often referred to as `quantization' and it is an important subject in information theory. It has broad applications in signal processing, telecommunications, data compression, image processing and cluster analysis. We refer to \cite{GG, GN, Z} for surveys on the subject and comprehensive lists of references to the literature, see also \cite{GKL}.  Rigorous mathematical treatment of the quantization theory is given in Graf-Luschgy's book (see \cite{GL1}).

Let $P$ denote a Borel probability measure on $\D R^d$ and let $\|\cdot\|$ denote the Euclidean norm on $\D R^d$ for any $d\geq 1$. We consider an approximation of $P$ by a
measure supported on at most a finite number of points, $n$.  The $n$th \textit{quantization
error} for $P$ is defined by
\begin{equation*} \label{eq1} V_n:=V_n(P)=\inf \Big\{\int \min_{a\in\alpha} \|x-a\|^2 dP(x) : \alpha \subset \mathbb R^d, \text{ card}(\alpha) \leq n \Big\},\end{equation*}
where the infimum is taken over all subsets $\alpha$ of $\mathbb R^d$ with card$(\alpha)\leq n$ for $n\geq 1$. Notice that if $\int \| x\|^2 dP(x)<\infty$, then there is some set $\alpha$ for
which the infimum is achieved (see \cite{GL1}). This set $\ga$ can then be used to give a best approximation of
$P$ by a discrete probability supported on a set with no more than
$n$ points.  Such a set $\ga$ for which the infimum occurs and contains no more than $n$ points is called an \tit{optimal set of $n$-means}, or \tit{optimal set of $n$-quantizers}. It is known that for a continuous probability measure $P$ an optimal set of $n$-means always has exactly $n$ elements (see \cite{GL1}). The probability measure $P$ considered in this paper is a uniform distribution which is absolutely continuous with respect to the Lebesgue measure $\gl$, and so there exists a probability density function $f$, known as Radon-Nikodym derivative of $P$ with respect to $\gl$, with $f\geq 0$ and $\int f d\gl=1$ such that for any Borel subset $B\sci \D R^d$, we have
\begin{equation} \label{eq001} P(B) =\int_B f d\gl.\end{equation}
Given a finite subset $\ga\sci \D R^d$, the \tit{Voronoi region} generated by $a\in \ga$ is defined by
\[M(a|\ga)=\set{x \in \D R^d : \|x-a\|=\min_{b \in \ga}\|x-b\|}\]
i.e., the Voronoi region generated by $a\in \ga$ is the set of all points in $\D R^d$ which are closest to $a \in \ga$, and the set $\set{M(a|\ga) : a \in \ga}$ is called the \tit{Voronoi diagram} or \tit{Voronoi tessellation} of $\ga$. A Borel measurable partition $\set{A_a : a \in \ga}$ of $\D R^d$  is called a \tit{Voronoi partition} of $\D R^d$ with respect to $\ga$ (and $P$) if $P$-almost surely, we have
\[A_a \sci M(a|\ga) \te{ for every $a \in \ga$}.\]
Notice that if $\ga=\set{a_1, a_2, \cdots, a_n}$ is an optimal set of $n$-means for $P$ and $\set{A_1, A_2, \cdots, A_n}$ is a Voronoi partition with respect to $\ga$, then
\[V_n=\sum_{i=1}^n \int_{A_i} \|x-a_i\|^2 dP(x).\]

Let us now state the following proposition (see \cite{GG, GL1}).
\begin{prop} \label{prop10}
Let $\alpha$ be an optimal set of $n$-means, $a \in \alpha$, and $M (a|\ga)$ be the Voronoi region generated by $a\in \ga$, i.e.,
\[M(a|\ga)=\{x \in \mathbb R^d : \|x-a\|=\min_{b \in \alpha} \|x-b\|\}.\]
Then, for every $a \in\alpha$,

$(i)$ $P(M(a|\ga))>0$, $(ii)$ $ P(\partial M(a|\ga))=0$, $(iii)$ $a=E(X : X \in M(a|\ga))$, and $(iv)$ $P$-almost surely the set $\set{M(a|\ga) : a \in \ga}$ forms a Voronoi partition of $\D R^d$.
\end{prop}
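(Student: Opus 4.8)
The plan is to establish the four items in the stated order, in each case playing optimality of $\ga$ against the fact that $P$ is non-atomic and satisfies $P\ll\gl$ via \eqref{eq001}. Two elementary facts do most of the work. First, the parallel-axis identity: if $P(A)>0$ and $c=\frac{1}{P(A)}\int_A x\,dP(x)$ is the $P$-centroid of $A$, then for every $b\in\D R^d$,
\[
\int_A\|x-b\|^2\,dP(x)=\int_A\|x-c\|^2\,dP(x)+P(A)\,\|b-c\|^2 .
\]
Second, the Voronoi boundary $\pa M(a|\ga)$ is contained in the finite union $\UUm{H_{a,b}}{b\in\ga,\ b\neq a}$ of bisecting hyperplanes $H_{a,b}=\setm{x\in\D R^d}{\|x-a\|=\|x-b\|}$, each of which is $\gl$-null.

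Items (ii) and (iv) then come cheaply. For (ii): $\pa M(a|\ga)$ sits inside a finite union of hyperplanes, hence $\gl(\pa M(a|\ga))=0$, and $P\ll\gl$ gives $P(\pa M(a|\ga))=0$. For (iv): the Voronoi regions cover $\D R^d$ because every $x$ is closest to some $a\in\ga$, while any two of them meet only inside some $H_{a,b}$, which is $P$-null; assigning each overlap point to the generator of least index produces a Borel partition $\set{A_a:a\in\ga}$ with $A_a\ci M(a|\ga)$, and $P$-almost surely $A_a=M(a|\ga)$, so this is a Voronoi partition. In particular $P(A_a)=P(M(a|\ga))$ for every $a$.

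For (i), suppose $P(M(a|\ga))=0$ for some $a\in\ga$. Using the partition from (iv), $\int_{A_a}\|x-a\|^2\,dP(x)=0$, so evaluating against this partition gives $\int\min_{b\in\ga\bs\set a}\|x-b\|^2\,dP(x)\leq V_n$; since trivially $V_n\leq V_{n-1}$, the set $\ga\bs\set a$, which has fewer than $n$ points, attains $V_n$. This contradicts the fact (quoted above from \cite{GL1}, and valid here because the uniform distribution on a triangle is continuous with infinite support) that an optimal set of $n$-means has exactly $n$ elements; equivalently, one can argue directly that $V_{n-1}>V_n$ by inserting an $n$th point into a positive-measure Voronoi region of an optimal $(n-1)$-set on which $P$ is not concentrated at a single point, strictly lowering the error. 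Hence $P(M(a|\ga))>0$.

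Finally, for (iii), fix $a\in\ga$, which by (i) has $P(M(a|\ga))>0$, and set $c=E(X\mid X\in M(a|\ga))$, the $P$-centroid of $M(a|\ga)$. Replace $a$ by $c$: the set $\ga'=(\ga\bs\set a)\uu\set c$ has at most $n$ points, so evaluating its error against the old partition $\set{A_i}$ and applying the parallel-axis identity with $A=A_a$ (recall $P(A_a)=P(M(a|\ga))$),
\[
V_n\leq\sum_{a_i\neq a}\int_{A_i}\|x-a_i\|^2\,dP(x)+\int_{A_a}\|x-c\|^2\,dP(x)=V_n-P(M(a|\ga))\,\|a-c\|^2 .
\]
Since $P(M(a|\ga))>0$, this forces $a=c$. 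The only step requiring real care is the strict inequality $V_{n-1}>V_n$ underlying (i) --- or the cited structural fact that replaces it --- and the non-atomicity of $P$ is exactly what makes it true; the rest is bookkeeping around the two elementary facts above.
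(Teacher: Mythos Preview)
The paper does not supply its own proof of this proposition: it is quoted from the literature with the parenthetical ``(see \cite{GG, GL1})'' and used without argument. So there is no paper proof against which to compare your approach.

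Your argument is correct for the setting the paper actually works in, namely $P\ll\gl$ as in \eqref{eq001}. The parallel-axis identity together with the hyperplane structure of Voronoi boundaries is exactly the standard route (and is essentially how Graf--Luschgy prove it). A couple of minor remarks: in (i), your phrasing ``evaluating against this partition gives $\int\min_{b\in\ga\bs\set a}\|x-b\|^2\,dP(x)\leq V_n$'' is slightly indirect; the clean statement is that for $x\notin M(a|\ga)$ one has $\min_{b\in\ga}\|x-b\|=\min_{b\in\ga\bs\set a}\|x-b\|$, and $P(M(a|\ga))=0$ then gives equality of the two integrals directly. In (iii), you silently use that the $P$-centroid of $M(a|\ga)$ equals that of $A_a$, which is fine because $P(M(a|\ga)\bs A_a)=0$ by (ii); it would be worth saying so explicitly. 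Finally, note that the proposition as stated in the paper is phrased for a general $P$, whereas your proof of (ii) genuinely needs $P\ll\gl$ (or at least that $P$ gives no mass to hyperplanes); this is not a defect for the paper's purposes, but you should flag the hypothesis.
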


Let $\alpha$ be an optimal set of $n$-means and  $a \in \alpha$, then by Proposition~\ref{prop10}, we have
\begin{align*}
a=\frac{1}{P(M(a|\ga))}\int_{M(a|\ga)} x dP=\frac{\int_{M(a|\ga)} x dP}{\int_{M(a|\ga)} dP}=\frac{\int_{M(a|\ga)} x f(x) d\gl}{\int_{M(a|\ga)} f(x) d\gl},
\end{align*}
which implies that $a$ is the centroid of the Voronoi region $M(a|\ga)$ associated with the probability measure $P$ (see also \cite{DFG}).

The classical Cantor set $C$ is generated by the two contractive similarity mappings $S_1(x)=\frac 1 3 x$ and $S_2(x)=\frac 1 3 x +\frac 23 $ for all $x \in \D R$. Then, there exists a unique Borel probability measure $P$ on $\D R$ with support $C$ such that $P=\frac 1 2 P\circ S_1^{-1}+\frac 1 2 P\circ S_2^{-1}$, where $P\circ S_i^{-1}$ denotes the image measure of $P$ with respect to $S_i$ for $i=1, 2$ (see \cite{H}). Such a probability measure is mutually singular with respect to the Lebesgue measure, and in \cite{GL2}, Graf-Luschgy investigated the optimal quantization for this measure $P$.

In this paper, we have considered a uniform distribution on an equilateral triangle, and investigated the optimal sets of $n$-means and the $n$th quantization error for this distribution for all $n\geq 1$. Moreover, in Theorem~\ref{th2}, we have shown that the  Voronoi regions generated by the two points in an optimal set of two-means partition the equilateral triangle into an isosceles trapezoid and an equilateral triangle in the Golden ratio. In subsequent sections, we find the optimal sets of three- and four-means.  In the last section, in Theorem~\ref{th:6.3} and in its corollary, we have given some numerical optimization results and conjectures about the optimal configurations for $n$ points, a rigorous bound on the quantization error for $n\to\infty$, and a final conjecture about uniform distributions in more general geometries.

Our approach illustrates methods for far more general geometries, including the use of symmetry to find optimal sets for small $n$, numerical optimisation for intermediate $n$, and configurations close to the triangular lattice for large $n$.  Efficient quantization due to matching of the boundaries to a triangular lattice is only possible in polygons with all angles a multiple of $\pi/3$.  The simplest and most natural example of this is the equilateral triangle.

\section{Some basic results relating to quantization and uniform distributions}
In this section we give some basic results relating to optimal sets and the uniform probability distributions defined on equilateral triangles.
Let $X=(X_1, X_2)$ be a bivariate continuous random variable with uniform distribution taking values on the triangle $\tri$ with vertices $(0, 0), \, (1, 0), \, (\frac 1 2, \frac{\sqrt 3}{2})$. Then, the  probability density function (pdf) $f(x_1, x_2)$  of the random variable $X$ is given by
\[f(x_1, x_2)=\left\{\begin{array}{ccc}
\frac{4}{\sqrt 3} & \te{ for }   0<x_1<\frac 12, \  0<x_2<\sqrt 3 x_1, \\
\frac{4}{\sqrt 3} &  \te{ for }   \frac 1 2<x_1<1, \ 0<x_2<-\sqrt 3 x_1+\sqrt 3, \\
\ 0  & \te{ otherwise}.
\end{array}\right.
\]
Notice that the pdf satisfies the following two necessary conditions:

$(i)$ $f(x_1, x_2)\geq 0$ for all $(x_1, x_2) \in \D R^2$, \te{ and }

$(ii)$ $\iint_{\D R^2} f(x_1, x_2)\, dx_1 dx_2=\int_0^{\frac 12}\int_0^{\sqrt 3 x_1}f(x_1, x_2)\,dx_2 dx_1+\int_{\frac 12}^{1}\int_0^{-\sqrt 3 x_1+\sqrt 3}f(x_1, x_2)\,dx_2 dx_1=1$.

Moreover, one should notice that the  pdf of the bivariate random variable $X$ can also be written in the following form:
\[f(x_1, x_2)=\left\{\begin{array}{ccc}
\frac{4}{\sqrt 3} & \te{ for }   0<x_2<\frac {\sqrt 3}{2},  \ \frac{x_2}{\sqrt 3}<x_1<\frac{\sqrt 3-x_2}{\sqrt 3}, \\
\ 0 &  \te{ otherwise}.
\end{array}\right.
\]
Let $f_1(x_1)$ and $f_2(x_2)$ represent the marginal pdfs of the random variables $X_1$ and $X_2$ respectively. Then, following the definitions in Probability Theory, we have
\[f_1(x_1)=\int_{-\infty}^\infty f(x_1, x_2) \,dx_2 \te{ and } f_2(x_2)=\int_{-\infty}^\infty f(x_1, x_2) \,dx_1.\]
Since $\int_0^{\sqrt 3 x_1} f(x_1, x_2)\,dx_2=4x_1$ for $0< x_1<\frac 12$, and $\int_0^{-\sqrt 3 x_1+\sqrt 3} f(x_1, x_2)\,dx_2=4(1-x_1)$ for $\frac 12<x_1<1$,
we have
\[f_1(x_1)=\left\{\begin{array}{ccc}
4 x_1 & \te{ for }   0<x_1<\frac 12, \\
4(1-x_1) & \te{ for }   \frac 1 2<x_1<1,\\
0   & \te{ otherwise}.
\end{array}\right.
\]
Similarly, we can write
\[f_2(x_2)=\left\{\begin{array}{cc} \frac {4}{\sqrt 3}(1-\frac{2x_2}{\sqrt 3}) & \te{ for } 0<x_2<\frac{\sqrt 3}{2},\\
0   & \te{ otherwise}.
\end{array}\right.
\]
Notice that both $f_1(x_1)$ and $f_2(x_2)$ satisfy the necessary conditions for pdfs: $f_1(x_1)\geq 0$, $f_2(x_2)\geq 0$ for all $x_1, x_2\in \D R$, and
\[\int_{-\infty}^\infty f_1(x_1)\,dx_1=1=\int_{-\infty}^\infty f_2(x_2)\,dx_2.\]
For a random variable $Y$, let $E(Y)$  and $V(Y)$ represent the expected vector and the expected squared distance of $Y$. Let $i$ and $j$ be the unit vectors in the positive directions of $x_1$ and $x_2$-axes respectively. By the position vector $\tilde a$ of a point $A$, it is meant that $\overrightarrow{OA}=\tilde a$. In the sequel, we will identify the position vector of a point $(a_1, a_2)$ by $(a_1, a_2):=a_1 i +a_2 j$, and apologize for any abuse in notation. For any two vectors $\vec u$ and $\vec v$, let $\vec u \cdot \vec v$ denote the dot product between the two vectors $\vec u$ and $\vec v$. Then, for any vector $\vec v$, by $(\vec v)^2$, we mean $(\vec v)^2:= \vec v\cdot \vec v$. Thus, $|\vec v|:=\sqrt{\vec v\cdot \vec v}$, which is called the length of the vector $\vec v$. For any two position vectors $\tilde a:=( a_1, a_2)$ and $\tilde b:=( b_1, b_2)$, we write $\rho(\tilde a, \tilde b):=(( a_1-b_1, a_2-b_2))^2=(a_1-b_1)^2 +(a_2-b_2)^2$.

Let us now prove the following lemma.
\begin{lemma}
Let $X=(X_1, X_2)$ be a bivariate continuous random variable with uniform distribution taking values on the triangle $\tri$. Then,
\[E(X)=(E(X_1), E(X_2))=(\frac 1 2, \frac{\sqrt 3}{6}) \te{ and } V(X)=V(X_1)+V(X_2)=\frac 1 {12}.\]
\end{lemma}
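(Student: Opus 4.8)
The plan is to reduce everything to one–dimensional integrals against the marginal densities $f_1$ and $f_2$ already computed above. Indeed $E(X_i)=\int_{\D R} x_i f_i(x_i)\,dx_i$ and $V(X_i)=E(X_i^2)-(E(X_i))^2$ with $E(X_i^2)=\int_{\D R} x_i^2 f_i(x_i)\,dx_i$, so only elementary polynomial integrations are involved. The decomposition $V(X)=V(X_1)+V(X_2)$ requires no independence: by definition $V(X)=E(\|X-E(X)\|^2)=E\big((X_1-E(X_1))^2+(X_2-E(X_2))^2\big)$, and linearity of expectation splits this into $V(X_1)+V(X_2)$.

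First I would treat $X_1$. Since $f_1$ is piecewise linear, $E(X_1)=\int_0^{1/2}4x_1^2\,dx_1+\int_{1/2}^1 4x_1(1-x_1)\,dx_1=\frac16+\frac13=\frac12$; alternatively this is immediate from the reflection symmetry of $\tri$ and of $f_1$ about the line $x_1=\frac12$. Likewise $E(X_1^2)=\int_0^{1/2}4x_1^3\,dx_1+\int_{1/2}^1 4x_1^2(1-x_1)\,dx_1=\frac1{16}+\frac{11}{48}=\frac7{24}$, hence $V(X_1)=\frac7{24}-\frac14=\frac1{24}$.

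Next I would treat $X_2$ using $f_2(x_2)=\frac{4}{\sqrt3}\big(1-\frac{2x_2}{\sqrt3}\big)$ on $(0,\tfrac{\sqrt3}{2})$. A direct computation gives $E(X_2)=\frac{4}{\sqrt3}\int_0^{\sqrt3/2}\big(x_2-\tfrac{2x_2^2}{\sqrt3}\big)\,dx_2=\frac{4}{\sqrt3}\cdot\frac18=\frac{\sqrt3}{6}$ and $E(X_2^2)=\frac{4}{\sqrt3}\int_0^{\sqrt3/2}\big(x_2^2-\tfrac{2x_2^3}{\sqrt3}\big)\,dx_2=\frac{4}{\sqrt3}\cdot\frac{\sqrt3}{32}=\frac18$, so $V(X_2)=\frac18-\frac{3}{36}=\frac1{24}$. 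Combining, $E(X)=(\frac12,\frac{\sqrt3}{6})$ and $V(X)=V(X_1)+V(X_2)=\frac1{12}$.

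There is no real obstacle here; the only thing to watch is the bookkeeping—keeping the two branches of $f_1$ separate, and not dropping the $\sqrt3$ factors that appear both in $f_2$ and in the upper limit $\frac{\sqrt3}{2}$. A useful sanity check to record is that $(\frac12,\frac{\sqrt3}{6})$ is exactly the centroid of $\tri$ (the arithmetic mean of its three vertices), as it must be for a uniform distribution; by Proposition~\ref{prop10}$(iii)$ applied with $n=1$ this also identifies the optimal one-mean and yields $V_1=V(X)=\frac1{12}$.
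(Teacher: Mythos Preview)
Your proof is correct and follows essentially the same approach as the paper: compute the first and second moments of $X_1$ and $X_2$ against their marginal densities $f_1,f_2$, obtain $V(X_i)=\tfrac{1}{24}$ for $i=1,2$, and sum. Your explicit remark that the decomposition $V(X)=V(X_1)+V(X_2)$ needs only linearity of expectation (not independence), together with the centroid sanity check, are welcome clarifications but do not change the method.
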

\begin{proof}
We have
\begin{align*}
E(X_1)& =\int_{-\infty}^\infty x_1 f_1(x_1)\, dx_1=\int_0^{\frac{1}{2}} 4 x_1^2 \, dx_1+\int_{\frac{1}{2}}^1 4 \left(1-x_1\right) x_1 \, dx_1=\frac{1}{2},\\
E(X_2) &=\int_{-\infty}^\infty x_2 f_2(x_2)\, dx_2=\int_0^{\frac{\sqrt{3}}{2}} \frac{4}{\sqrt{3}} \left(1-\frac{2 x_2}{\sqrt{3}}\right) x_2 \, dx_2=\frac {\sqrt 3}{6},\\
E(X_1^2)& =\int_{-\infty}^\infty x_1^2 f_1(x_1)\, dx_1=\int_0^{\frac{1}{2}} 4 x_1^3 \, dx_1+\int_{\frac{1}{2}}^1 4 \left(1-x_1\right) x_1^2 \, dx_1=\frac{7}{24},\\
E(X_2^2)&=\int_{-\infty}^\infty x_2^2 f_2(x_2)\, dx_2=\int_0^{\frac{\sqrt{3}}{2}} \frac{4}{\sqrt{3}} \left(1-\frac{2 x_2}{\sqrt{3}}\right) x_2^2 \, dx_2=\frac {1}{8},
\end{align*}
and so, \begin{align*}
E(X)& =\iint(x_1 i+x_2 j) f(x_1, x_2) dx_1dx_2 =i \int x_1 f_1(x_1) dx_1+j \int x_2 f_2(x_2) dx_2\\
& =(E(X_1), E(X_2))=(\frac 1 2, \frac{\sqrt 3}{6}),\\
V(X_1)& =E(X_1^2)-[E(X_1)]^2=\frac{1}{24} \te{ and } V(X_2)=E(X_2^2)-[E(X_2)]^2=\frac{1}{24}.
\end{align*}
Thus, we have
\[V(X)=E\|X-E(X)\|^2=\iint \Big((x_1-E(X_1))^2 +(x_2-E(X_2))^2\Big)f(x_1, x_2)\, dx_1 dx_2,\]
which yields,
\[V(X)=\int (x_1-E(X_1))^2f_1(x_1)\,dx_1+\int (x_2-E(X_2))^2f_2(x_2)\,dx_2=V(X_1)+V(X_2)=\frac{1}{12}.\]
Hence the lemma.
\end{proof}
\begin{note} \label{note1} We have $E(X_1)=\frac 1 2$ and $E(X_2)=\frac{\sqrt 3}{6}$, and so by the standard rule of probability theory, for any two real numbers $a$ and $b$, we deduce
$E(X_1-a)^2=E(X_1-\frac 1 2)^2+(a-\frac 1 2)^2=V(X_1)+(a-\frac 1 2)^2$, and similarly
$E(X_2-b)^2=V(X_2)+(b-\frac{\sqrt 3}{6})^2$. Thus, for any $(a, b) \in \D R^2$, we have $E\|X-(a, b)\|^2=\iint_{\D R^2} [(x_1-a)^2+(x_2-b)^2]f(x_1, x_2)dx_1dx_2=\int_{\D R} (x_1-a)^2 f_1(x_1) dx_1+\int_{\D R} (x_2-b)^2f_2(x_2) dx_2=E(X_1-a)^2 +E(X_2-b)^2=V(X_1)+V(X_2)+(a-\frac 1 2)^2+(b-\frac{\sqrt 3}{6})^2=\frac 1 {12}+\|(a, b)-(\frac 12, \frac{\sqrt 3}{6})\|^2$.

\end{note}
\begin{note}
From Note~\ref{note1} it is clear that the optimal set of one-mean consists of the expected vector $(\frac 12, \frac {\sqrt 3}{6})$ of the random variable $X$, which is the centroid of the triangle $\tri$ and the corresponding quantization error is  $\frac 1 {12}$, which is the expected squared distance of the random variable $X$.
\end{note}

\section{Optimal sets of 2-means}

\begin{figure}
\begin{tikzpicture}[line cap=round,line join=round,>=triangle 45,x=3.0cm,y=3.0cm]
\clip(-0.1,-0.2) rectangle (1.1,1.);
\draw (0.5,0.866025)-- (0.,0.);
\draw (0.5,0.866025)-- (1.,0.);
\draw [color=ffqqqq] (0.30901704553129616,0.5352329737124816)-- (0.618034,0.);
\draw [dotted] (0.6197184621526524,0.6586666376284985)-- (0.618034,0.);
\draw [dotted] (0.5,0.866025)-- (0.49121462970320096,0.);
\draw [dotted] (0.30901704553129616,0.5352329737124816)-- (0.31350534220513926,0.);
\draw (0.,0.)-- (1.,0.);
\draw (0.35559438398099597,0.5280204116031809) node[anchor=north west] {$\ell$};
\begin{scriptsize}
\draw [fill=uuuuuu] (0.,0.) circle (1.5pt);
\draw[color=uuuuuu] (-0.013853871607079718,-0.06590273345613046) node {$O$};
\draw [fill=xdxdff] (1.,0.) circle (1.5pt);
\draw[color=xdxdff] (1.0524018533812907,0.06036439187143966) node {$A$};
\draw [fill=qqqqff] (0.5,0.866025) circle (1.5pt);
\draw[color=qqqqff] (0.5333036714790577,0.9302045885724783) node {$B$};
\draw [fill=xdxdff] (0.618034,0.) circle (1.5pt);
\draw[color=xdxdff] (0.6128051948334537,-0.07057929365344788) node {$C$};
\draw [fill=xdxdff] (0.30901704553129616,0.5352329737124816) circle (1.5pt);
\draw[color=xdxdff] (0.2246506984561084,0.5794625737736724) node {$D$};
\draw [fill=qqqqff] (0.309017,0.178411) circle (1.5pt);
\draw[color=qqqqff] (0.23868037904806064,0.17260183660705755) node {$P$};
\draw [fill=qqqqff] (0.618034,0.356822) circle (1.5pt);
\draw[color=qqqqff] (0.682953597793215,0.3877236056836585) node {$Q$};
\draw [fill=xdxdff] (0.31350534220513926,0.) circle (1.5pt);
\draw [fill=xdxdff] (0.6197184621526524,0.6586666376284985) circle (1.5pt);
\draw [fill=xdxdff] (0.49121462970320096,0.) circle (1.5pt);
\end{scriptsize}
\end{tikzpicture}
\caption{Optimal configuration of two points $P$ and $Q$.}
\end{figure}
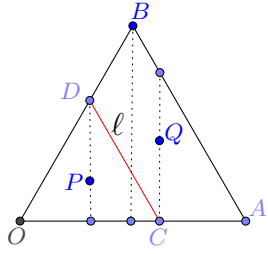

In this section we obtain all the optimal sets of two-means and the corresponding quantization error.
Let $\tri$ be the equilateral triangle with vertices $O (0, 0), \, A (1, 0)$, and $B (\frac 1 2, \frac{\sqrt 3}{2})$. Let us divide the triangle $\tri$ by a straight line $\ell$ into two regions. Let us first assume that the vertex $O$ is in one side of $\ell$ and the vertices $A$ and $B$ are in the other side of $\ell$. It might be that one of $A$ and $B$ lies on the line $\ell$.  Thus, the triangle $\tri$ is divided into two regions: the triangle $OCD$ and the quadrilateral $CABD$, where $C$ and $D$ are the points of intersections of the line with the sides $OA$ and $OB$ respectively. If either $A$ or $B$ is on the line $\ell$, then $CABD$ will also be a triangle. Let $P$ and $Q$ be the centroids of the regions $OCD$ and $CABD$ respectively. Let the position vectors of $A, \, B, \, P, \, Q, \, C, \, D$ be denoted respectively by $\tilde a$, $\tilde b$, $\tilde p$, $\tilde q$, $\tilde c$, $\tilde d$. Then, there exist scalars $\ga$ and $\gb$ such that $ \tilde c=\ga \tilde a, \,  \tilde d=\gb \tilde b, \, \tilde p=\frac  1 3 (\tilde c+\tilde d)=\frac 1 3(\ga \tilde a+\gb \tilde b)$, and  the area of the triangle $ OCD=\frac{\sqrt{3}}{4} \ga \gb.$
Since the probability measure is uniformly distributed over $\tri$, taking moments about the origin, we have
\[\tilde q=\frac{\frac 1 3 (\tilde a +\tilde b)\frac{\sqrt{3}}{4}  -\frac 1 3(\ga \tilde a+\gb \tilde b) \frac{\sqrt{3}}{4}\ga \gb}{\frac{\sqrt{3}}{4}-\frac{\sqrt{3}}{4} \ga \gb}=\frac{\tilde a +\tilde b-\ga \gb(\ga \tilde a+\gb\tilde b)}{3(1-\ga \gb)}.\]
If $P$ and $Q$ form an optimal set of two-means, then $CD$ will be the boundary of their corresponding Voronoi regions, and so we have
$|\overrightarrow{CP}|=|\overrightarrow{CQ}| \te { and } |\overrightarrow{DP}|=|\overrightarrow{DQ}|, \te{ i.e., } (\overrightarrow{CP})^2=(\overrightarrow{CQ})^2 \te { and } (\overrightarrow{DP})^2=(\overrightarrow{DQ})^2.$
Using the dot product of vectors, we have  $\tilde a^2=\tilde b^2=1$ and $\tilde a \cdot \tilde b=1\cdot 1\cdot\cos \frac{\pi}{3}=\frac 1 2$.
 Then,  $(\overrightarrow{CP})^2=(\overrightarrow{CQ})^2$ implies
\[\Big(\frac 1 3(\ga \tilde a+\gb \tilde b)-\ga \tilde a\Big)^2=\Big(\frac{\tilde a +\tilde b-\ga \gb(\ga \tilde a+\gb\tilde b)}{3(1-\ga \gb)}-\ga \tilde a\Big)^2\]
which after simplification yields
\begin{equation} \label{eq123} 4 \alpha^3 \beta +\alpha^2 \beta^2-6 \alpha^2 \beta -5 \alpha^2-2 \alpha  \beta^3+3 \alpha  \beta^2-2 \alpha  \beta +9 \alpha +\beta^2-3=0.\end{equation}
Due to symmetry, $(\overrightarrow{DP})^2=(\overrightarrow{DQ})^2$ yields,
\begin{equation} \label{eq124} 4 \alpha \beta^3 +\alpha^2 \beta^2-6 \alpha \beta^2 -5 \beta^2-2 \alpha^3  \beta+3 \alpha^2  \beta-2 \alpha  \beta +9 \beta +\alpha^2-3=0.\end{equation}
Solving \eqref{eq123} and \eqref{eq124}, we get the five sets of solutions for $\ga$ and $\gb$:
$\set{\alpha =\frac{1}{2}, \, \beta=1}, \, \set{\alpha = 1, \, \beta =\frac{1}{2}},\, \set{\alpha = 1,\, \beta = 1}, \, \{\alpha = \frac{1}{2} (-1-\sqrt{5}),\, \beta = \frac{1}{2} (-1-\sqrt{5})\}, \te{ and } \{\alpha = \frac{1}{2} (\sqrt{5}-1), \, \beta = \frac{1}{2} (\sqrt{5}-1)\},$
among which the admissible solutions are
$\set {\alpha =\frac{1}{2}, \, \beta=1}, \, \{\alpha = 1, \, \beta =\frac{1}{2}\}, \, \{\alpha = \frac{1}{2} (\sqrt{5}-1), \, \beta = \frac{1}{2} (\sqrt{5}-1)\}.$
If $\set{\alpha =\frac{1}{2}, \, \beta=1}$, then the line $\ell$ passes through the vertex $B$, and if $\set{\alpha = 1, \, \beta =\frac{1}{2}}$, then the line $\ell$ passes through the vertex $A$. Let us first take $\{\alpha =\frac{1}{2}, \, \beta=1\}$. Then, $\tilde p=( \frac{1}{3}, \frac{1}{2 \sqrt{3}})$ and $\tilde q=( \frac{2}{3}, \frac{1}{2 \sqrt{3}})$, and the corresponding quantization error
\begin{align*}=&\int _0^{\frac{1}{2}}\int _0^{\sqrt{3} x_1}\frac{4 ((x_1-\frac{1}{3}){}^2+(x_2-\frac{1}{2 \sqrt{3}}){}^2)}{\sqrt{3}}dx_2dx_1\\
&+\int _{\frac{1}{2}}^1\int _0^{-\sqrt{3} (x_1-1)}\frac{4 ((x_1-\frac{2}{3}){}^2+(x_2-\frac{1}{2 \sqrt{3}}){}^2)}{\sqrt{3}}dx_2dx_1=\frac{1}{18}=0.0555556.
\end{align*}
Similarly, it can be shown that if $\{\alpha = 1, \, \beta =\frac{1}{2}\}$, then the quantization error is $0.0555556$. Now take $\ga=\gb=\frac{1}{2} (\sqrt{5}-1)$. Then,
$\tilde p=( 0.309017, 0.178411)  \te{ and }  \tilde q=( 0.618034, 0.356822) $, and the corresponding quantization error
\begin{align*}
&=\int _0^{\frac{1}{4} (\sqrt{5}-1}\int _0^{\sqrt{3} x_1}\frac{4 ((x_1-0.309017){}^2+(x_2-0.178411){}^2)}{\sqrt{3}}dx_2dx_1\\
&+\int _{\frac{1}{4} (\sqrt{5}-1)}^{\frac{1}{2} (\sqrt{5}-1)}\int _0^{\frac{1}{2} (\sqrt{15}-\sqrt{3})-\sqrt{3} x_1}\frac{4 ((x_1-0.309017){}^2+(x_2-0.178411){}^2)}{\sqrt{3}}dx_2dx_1\\
&+\int _{\frac{1}{4} (\sqrt{5}-1)}^{\frac{1}{2}}\int _{\frac{1}{2} (\sqrt{15}-\sqrt{3})-\sqrt{3} x_1}^{\sqrt{3} x_1}\frac{4 ((x_1-0.618034){}^2+(x_2-0.356822){}^2)}{\sqrt{3}}dx_2dx_1\\
&+\int _{\frac{1}{2}}^{\frac{1}{2} (\sqrt{5}-1)}\int _{\frac{1}{2} (\sqrt{15}-\sqrt{3})-\sqrt{3} x_1}^{-\sqrt{3} (x_1-1)}\frac{4 ((x_1-0.618034){}^2+(x_2-0.356822){}^2)}{\sqrt{3}}dx_2dx_1\\
&+\int _{\frac{1}{2} (\sqrt{5}-1)}^1\int _0^{-\sqrt{3} (x_1-1)}\frac{4 ((x_1-0.618034){}^2+(x_2-0.356822){}^2)}{\sqrt{3}}dx_2dx_1\\
&=0.0532767.
\end{align*}
Since $0.0532767<0.0555556$, an optimal set of two-means is obtained for $\ga=\gb= \frac{1}{2} (\sqrt{5}-1)$, i.e., the set $\set{(0.309017, 0.178411),\, (0.618034, 0.356822)}$ forms an optimal set of two-means, and the two means lie on the median passing through the vertex $O$ (see Figure 1). Notice that $g^{-1}=\frac{1}{2} (\sqrt{5}-1)$, where $g:=\frac{\sqrt 5+1}{2}$ is the golden ratio. Since $\ga=\gb=g^{-1}$, we can say that the line $\ell$ is parallel to the side $AB$, and cuts the triangle $\tri$ into an equilateral triangle and an isosceles trapezoid. Due to symmetry, the line $\ell$ can also be parallel to either $OA$ or $OB$, i.e., the two means can also lie either on the median passing through the vertex $B$, or on the median passing through the vertex $A$. Moreover, it can be seen that
\[\frac{\te{Area of the isosceles trapezoid } CABD }{\te{Area of the equilateral triangle } OCD}=\frac {\frac{1}{8} \sqrt{3} (\sqrt{5}-1)}{\frac{1}{8} \sqrt{3}(3-\sqrt 5)}=\frac{\sqrt 5-1}{3-\sqrt{5}}=\frac{g^2}{g}=g.\]
Therefore, we can deduce the following theorem.

\begin{theorem} \label{th2}
Let $X$ be a random variable with uniform distribution on the equilateral triangle $\tri$ with vertices $(0, 0), \, (1, 0)$, and $(\frac 1 2, \frac{\sqrt 3}{2})$. Then, there are three optimal sets of two-means with quantization error $0.0532767$. If the triangle $\tri$ is partitioned into an isosceles trapezoid and an equilateral triangle in the golden ratio, then the centroids of the isosceles trapezoid and the equilateral triangle form an optimal set of two-means.
\end{theorem}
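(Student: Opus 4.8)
The theorem is essentially a packaging of the analysis carried out above, and the plan is to present it as a ``necessary conditions, then compare'' argument. First I would note that, since the uniform distribution on $\tri$ has finite second moment and is non-atomic, an optimal set of two-means exists and has exactly two distinct points, which I denote $P$ and $Q$ with position vectors $\tilde p,\tilde q$ (\cite{GL1}). By Proposition~\ref{prop10}, the two Voronoi regions are the closed half-planes bounded by the perpendicular bisector $\ell$ of the segment $PQ$, and each has positive measure; hence $\ell$ meets the interior of $\tri$, so it crosses two of the three sides (possibly at a vertex) and therefore isolates exactly one vertex of $\tri$. Using the symmetry group (of order six) of the equilateral triangle I may assume this vertex is $O$, so that $\ell$ meets $OA$ at $C=\ga\tilde a$ and $OB$ at $D=\gb\tilde b$ with $\ga,\gb\in(0,1]$ and $\ga\gb<1$ (both regions nonempty) --- precisely the configuration set up above.

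Next I would impose the two necessary conditions. Proposition~\ref{prop10}$(iii)$ forces $P$ and $Q$ to be the centroids of the triangle $OCD$ and of the quadrilateral $CABD$, which gives the displayed formulas for $\tilde p$ and $\tilde q$; and since $C,D$ lie on $\ell$, the perpendicular bisector of $PQ$, we get $(\overrightarrow{CP})^2=(\overrightarrow{CQ})^2$ and $(\overrightarrow{DP})^2=(\overrightarrow{DQ})^2$. Expanding these with $\tilde a^2=\tilde b^2=1$ and $\tilde a\cdot\tilde b=\tfrac12$ yields the polynomial system \eqref{eq123}--\eqref{eq124}; solving it (routine elimination) produces the five listed pairs, of which only $\{\ga=\tfrac12,\gb=1\}$, $\{\ga=1,\gb=\tfrac12\}$, and $\{\ga=\gb=\tfrac12(\sqrt5-1)\}$ are admissible. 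Thus every optimal two-means set with $O$ isolated is the centroid pair of one of these three partitions; a direct computation of the quantization error for each gives $\tfrac1{18}$, $\tfrac1{18}$, and $0.0532767$, and since $0.0532767<\tfrac1{18}$ the golden-ratio cut $\ga=\gb=g^{-1}$ wins.

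Finally I would read off the geometry: $OC=g^{-1}\,OA$ and $OD=g^{-1}\,OB$ force $CD\parallel AB$, so $\ell$ splits $\tri$ into the small equilateral triangle $OCD$ and the isosceles trapezoid $CABD$, whose areas are in the ratio $g$ (computed above), with $P$ and $Q$ their respective centroids; this is the second assertion. Undoing the symmetry reduction, the isolated vertex may instead be $A$ or $B$, giving exactly three optimal two-means sets, one along each median, all with quantization error $0.0532767$ --- the first assertion.

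I expect the main obstacle to be making the necessary-conditions step genuinely exhaustive: one must be sure that every optimal configuration --- including the two vertex-passing lines, which appear as the limiting cases $\gb\to1$ and $\ga\to1$ of the family --- is among the solutions of \eqref{eq123}--\eqref{eq124}, and that this system has no further admissible roots beyond the three found. The rest is arithmetic, the only delicate point being to evaluate the five-piece integral for the golden-ratio configuration accurately enough to certify the strict inequality $0.0532767<\tfrac1{18}$ on which the whole conclusion depends.
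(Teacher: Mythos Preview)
Your proposal is correct and follows essentially the same approach as the paper: parametrize the separating line by $\ga,\gb$, compute the two centroids, impose the equidistance conditions at $C$ and $D$ to obtain the polynomial system \eqref{eq123}--\eqref{eq124}, compare the errors of the three admissible solutions, and then invoke the triangle's symmetry to conclude there are exactly three optimal sets. Your write-up is slightly more explicit than the paper's about the logical scaffolding (existence, why $\ell$ must isolate a vertex, the symmetry reduction), but the mathematical content is identical.
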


\section{Optimal set of 3-means}

\begin{theorem}
For uniform distribution on the equilateral triangle with vertices $(0, 0)$, $(1, 0)$ and $(\frac{1}{2}, \frac{\sqrt{3}}{2})$, the set $\set{(\frac{7}{24}, \frac{7}{24 \sqrt{3}}), \, (\frac{17}{24}, \frac{7}{24 \sqrt{3}}), (\frac{1}{2}, \frac{11}{12 \sqrt{3}})}$ is the only optimal set of three-means. The three means in this case form an equilateral triangle having the sides parallel to the sides of the original triangle.
\end{theorem}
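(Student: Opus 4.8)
The plan is to first construct the candidate as the best $D_3$-symmetric stationary configuration, bound $V_3$ above by its error, and then show that no other configuration does at least as well.

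\emph{The candidate and an upper bound.} Since $\tri$ and the uniform measure $P$ are invariant under the dihedral group $D_3$, and optimal sets of $n$-means of a $D_3$-invariant measure come in $D_3$-orbits, I would first search for a $D_3$-symmetric stationary configuration: three points at a common distance $r$ from the centroid $G=(\tfrac12,\tfrac{\sqrt3}{6})$, one on each median of $\tri$. For such a set the three perpendicular bisectors of the pairs of means are exactly the three medians, so the Voronoi cells are the three congruent pieces into which the medians cut $\tri$ starting from $G$. Two sub-cases arise: either the means point from $G$ toward the vertices $O$, $A$, $B$, and each cell is the quadrilateral whose vertices are a vertex of $\tri$, the two adjacent side-midpoints $(\tfrac12,0)$, $(\tfrac14,\tfrac{\sqrt3}{4})$, $(\tfrac34,\tfrac{\sqrt3}{4})$, and $G$; or the means point toward the opposite sides, and each cell is one of the three median sub-triangles $OGA$, $OGB$, $AGB$. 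In each sub-case the requirement that every mean be the centroid of its own cell is a single equation for $r$; solving it and integrating $\min_i\|x-a_i\|^2$ over the cells yields, in the first sub-case, precisely the three points in the statement and a value $V$, and in the second sub-case a strictly larger value. In particular $V_3\le V$, and, the medians being the perpendicular bisectors of the pairs of means, the first configuration satisfies the necessary conditions of Proposition~\ref{prop10}.

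\emph{Optimality and uniqueness.} Let $\alpha=\{a_1,a_2,a_3\}$ be any optimal set and $M_i:=M(a_i\mid\alpha)$ its Voronoi regions. By Proposition~\ref{prop10} the $M_i$ are convex polygons of positive area tiling $\tri$ with $a_i$ the centroid of $M_i$; hence every $a_i$ lies in $\tri$, and the common boundary of two adjacent cells is a chord of $\tri$ lying on the perpendicular bisector of the corresponding pair of means. The adjacency graph of the three cells, being connected, is either a triangle or a path, giving two cases. \textbf{(I)} The three cell boundaries are rays from one point $p$: then the three perpendicular bisectors concur, the means are non-collinear, $p$ is the circumcenter of $a_1a_2a_3$, and matching ray directions to bisector directions while using that the angles of $\tri$ equal $\tfrac{\pi}{3}$ reduces the three centroid equations to a finite list of critical configurations; one verifies that the candidate is the strict minimizer of the error over this list, its chief competitor being the median-sub-triangle configuration of the first step, and the case $p\in\partial\tri$ degenerating into case (II). \textbf{(II)} There are exactly two disjoint chords splitting $\tri$ into a chain $M_1$--$M_2$--$M_3$: then, parametrizing the two chords by their endpoints on $\partial\tri$ (finitely many incidence sub-cases) and imposing the centroid and perpendicular-bisector conditions, one obtains an algebraic system whose every admissible solution has error larger than $V$. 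Combining the cases gives $V_3=V$, that the candidate is optimal and is the unique optimal set; by construction its three points form an equilateral triangle with sides parallel to those of $\tri$.

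\emph{The main difficulty.} The real work is case (II): one must solve, or rule out below the level $V$, the coupled nonlinear system in which each $a_i$ must equal the centroid of $M_i$ while each chord must be the perpendicular bisector of the matching pair of means, for every incidence sub-case, and separately exclude degenerate configurations in which a cell collapses to zero area. The individual polygon area and centroid integrals are routine; what is delicate is keeping the case analysis finite and discarding the spurious algebraic branches, much as the polynomial system \eqref{eq123}--\eqref{eq124} had to be resolved for two-means.
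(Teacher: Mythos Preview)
Your plan is a genuinely different route from the paper's. The paper does not do a case analysis: it simply \emph{assumes} at the outset that an optimal three-means set has a reflection symmetry about one median (``Due to symmetry \ldots we can assume that one element \ldots lies on a median \ldots and the other two are equidistant from the median'') and that the three Voronoi regions meet at a single point $M$ on that median. Under those assumptions the configuration is governed by just two scalars $\alpha,\beta$; the paper writes the centroid/equidistance conditions as two polynomial equations, solves to get $\alpha=\tfrac12$, $\beta=\tfrac23$, computes the error, and declares uniqueness ``due to symmetry.'' Your outline, by contrast, first produces the $D_3$-symmetric candidate for an upper bound and then attempts an honest exhaustion over Voronoi topologies (triple point vs.\ two disjoint chords). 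That is the right architecture for a rigorous proof and would close a real gap in the paper's argument.

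The weakness is that your outline does not yet deliver the computations it promises, and Case~(I) is not as small as you suggest. Without the paper's reflection-symmetry assumption, the triple-point case is a priori a four-parameter family (the location of $p$ plus the incidence pattern of the three rays on $\partial\tri$), and ``matching ray directions to bisector directions'' together with the $\pi/3$ angles does not by itself cut this down to a finite list: you still face a coupled polynomial system at least as large as the paper's \eqref{eq1001}--\eqref{eq1002}, now without the symmetry that made theirs two equations in two unknowns. Note also that $D_3$-invariance of $P$ only tells you that the $D_3$-orbit of an optimum consists of optima; it does \emph{not} guarantee a symmetric optimum, so you cannot shortcut Case~(I) that way. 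Case~(II), as you frankly say, is the heavy lifting and is only sketched. In short: your strategy is sounder than the paper's, but to turn it into a proof you must either (a) first prove that every optimal set is reflection-symmetric (which then collapses Case~(I) to the paper's two-parameter computation and kills Case~(II) outright), or (b) actually solve the general systems in both cases.
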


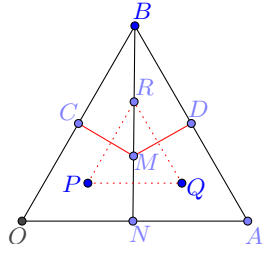
\begin{figure}
\begin{tikzpicture}[line cap=round,line join=round,>=triangle 45,x=3.0cm,y=3.0cm]
\clip(-0.1,-0.1) rectangle (1.1,1.);
\draw (0.5,0.866025)-- (0.,0.);
\draw (0.5,0.866025)-- (1.,0.);
\draw (0.5,0.866025)-- (0.49121462970320096,0.);
\draw (0.,0.)-- (1.,0.);
\draw [color=ffqqqq] (0.2500002165064014,0.4330128749999126)-- (0.49414368914411855,0.2887344092080969);
\draw [color=ffqqqq] (0.7499997834935986,0.43301287499991253)-- (0.49414368914411855,0.2887344092080969);
\draw [dotted,color=ffqqqq] (0.291667,0.168394)-- (0.4965838237391188,0.5292726553199399);
\draw [dotted,color=ffqqqq] (0.708333,0.168394)-- (0.4965838237391188,0.5292726553199399);
\draw [dotted,color=ffqqqq] (0.291667,0.168394)-- (0.708333,0.168394);
\begin{scriptsize}
\draw [fill=uuuuuu] (0.,0.) circle (1.5pt);
\draw[color=uuuuuu] (-0.018530431804397132,-0.06522617325881305) node {$O$};
\draw [fill=xdxdff] (1.,0.) circle (1.5pt);
\draw[color=xdxdff] (1.0280190523947036,-0.06522617325881305) node {$A$};
\draw [fill=qqqqff] (0.5,0.866025) circle (1.5pt);
\draw[color=qqqqff] (0.5333036714790577,0.9302045885724783) node {$B$};
\draw [fill=xdxdff] (0.2500002165064014,0.4330128749999126) circle (1.5pt);
\draw[color=xdxdff] (0.20723821687756909,0.4859313698273242) node {$C$};
\draw [fill=qqqqff] (0.291667,0.168394) circle (1.5pt);
\draw[color=qqqqff] (0.21997413825879097,0.16324871621242273) node {$P$};
\draw [fill=qqqqff] (0.708333,0.168394) circle (1.5pt);
\draw[color=qqqqff] (0.7718082415422458,0.14224871621242273) node {$Q$};
\draw [fill=xdxdff] (0.7499997834935986,0.43301287499991253) circle (1.5pt);
\draw[color=xdxdff] (0.7811613619368806,0.4999610504192764) node {$D$};
\draw [fill=xdxdff] (0.49121462970320096,0.) circle (1.5pt);
\draw[color=xdxdff] (0.5239505510844228,-0.05707305286417823) node {$N$};
\draw [fill=xdxdff] (0.4965838237391188,0.5292726553199399) circle (1.5pt);
\draw[color=xdxdff] (0.5426567918736925,0.5934922543656246) node {$R$};
\draw [fill=xdxdff] (0.49414368914411855,0.2887344092080969) circle (1.5pt);
\draw[color=xdxdff] (0.5520099122683273,0.26145648035608837) node {$M$};
\end{scriptsize}
\end{tikzpicture}
\caption{Optimal configuration of three points $P$, $Q$ and $R$.}
\end{figure}

\begin{proof}
Due to symmetry of the triangle with the uniform distribution, we can assume that one element in the optimal set of three-means lies on a median of the triangle, and the other two are equidistant from the median. As shown in Figure 2, let the median passing through the vertex $B$ cuts the side $OA$ at the point $N$, and let one element in the optimal set of three-means lie on this median. Let the boundaries of the Voronoi regions cut the sides $OB$ and $AB$ at the points $C$ and $D$ respectively. Let the three boundaries of the Voronoi regions meet at the point $M$ which lies on the median $BN$. Let the position vectors of the points $A,\, B, \, C, \, D, \, M, \, N$ be respectively $\tilde a, \, \tilde b, \, \tilde c, \, \tilde d, \, \tilde m, \, \tilde n$. Let $\ga$ and $\gb$ be two scalars such that the length of $BC$ equals $\ga$ and the length of $BM$ equals $\frac {\sqrt{3}}{2} \gb$. Due to symmetry, the length of $BD$ is also $\ga$. Then,
$\tilde  c=(1-\ga) \tilde b, \,  \tilde d  =\ga \tilde a +(1-\ga)\tilde b, \te{ and } \tilde m=\gb \tilde  n +(1-\gb) \tilde b.$
$\te{ Area of the triangle } BCM =\te{Area of the triangle } BDM=\frac{\sqrt{3}}{8} \ga \gb.$
Let the centroids of the quadrilaterals $ONMC$, $NADM$,  and $BCMD$ be $P$, $Q$, and $R$ with position vectors $\tilde p$, $\tilde q$, and $\tilde r$ respectively. Since the probability measure is uniformly distributed over $\tri$, taking moments about the origin, we have
\begin{align*}
\tilde p & =\frac{\frac 1 3 (\tilde b+\tilde n)\frac{\sqrt{3}}{8}- \frac 1 3 (\tilde b+\tilde c+\tilde m)\frac{\sqrt{3}}{8} \ga \gb}{\frac{\sqrt{3}}{8}-\frac{\sqrt{3}}{8} \ga \gb}=\frac{ \tilde b +\tilde n - (\tilde b +\tilde c+\tilde m)\ga \gb}{3(1-\ga \gb)},\\
\tilde q&=\frac{\frac 1 3 (\tilde a+\tilde b+\tilde n)\frac{\sqrt{3}}{8}- \frac 1 3 (\tilde b+\tilde d+\tilde m)\frac{\sqrt{3}}{8} \ga \gb}{\frac{\sqrt{3}}{8}-\frac{\sqrt{3}}{8} \ga \gb}=\frac{ \tilde a+\tilde b +\tilde n - (\tilde b +\tilde d+\tilde m)\ga \gb}{3(1-\ga \gb)},\\
\tilde r& =\frac{\frac 1 3 (\tilde b+\tilde c+\tilde m)\frac{\sqrt{3}}{8} \ga \gb+ \frac 1 3 (\tilde b+\tilde d+\tilde m)\frac{\sqrt{3}}{8} \ga \gb}{\frac{\sqrt{3}}{4} \ga \gb}=\frac{ \tilde c +\tilde d +2 (\tilde b +\tilde m)}{6}.
\end{align*}
If $P$, $Q$ and $R$ be the optimal points, we must have
$(\overrightarrow{RC})^2=(\overrightarrow{PC})^2, \, (\overrightarrow{RM})^2=(\overrightarrow{PM})^2, \, (\overrightarrow{RM})^2=(\overrightarrow{QM})^2 \te{ and } (\overrightarrow{RD})^2=(\overrightarrow{QD})^2. $
Using the dot product of vectors, we have
$\tilde a^2=1,\, \tilde b^2= 1, \, \tilde n^2=\frac{1}{4},\, \tilde a\cdot\tilde n=\frac{1}{2}, \, \tilde a\cdot\tilde b=\frac{1}{2},\, \tilde b \cdot \tilde n= \frac{1}{4}.$
Then, $(\overrightarrow{RC})^2=(\overrightarrow{PC})^2$ implies,
\[\Big((1-\alpha ) \tilde b-\frac{ \tilde c +\tilde d +2 (\tilde b +\tilde m)}{6}\Big)^2=\Big((1-\alpha ) \tilde b-\frac{ \tilde b +\tilde n - (\tilde b +\tilde c+\tilde m)\ga \gb}{3(1-\ga \gb)}\Big)^2,\]
which after simplification yields
\begin{equation} \label{eq1001}
5 \alpha ^4 \beta ^2+6 \alpha ^3 \beta +\alpha ^2 (6 \beta ^2-28 \beta -15)-6 \alpha  (\beta ^3-2 \beta ^2+2 \beta -7)+3 \beta ^2-13=0.\end{equation}
$ (\overrightarrow{RM})^2=(\overrightarrow{PM})^2$ implies
\[(\gb \tilde n +(1-\gb) \tilde b -\frac{ \tilde c +\tilde d +2 (\tilde b +\tilde m)}{6})^2=(\gb \tilde n +(1-\gb) \tilde b-\frac{ \tilde b +\tilde n - (\tilde b +\tilde c+\tilde m)\ga \gb}{3(1-\ga \gb)})^2\] which after simplification yields
\begin{equation} \label{eq1002}
\alpha ^4 \left(-\beta ^2\right)-6 \alpha ^3 \beta +\alpha ^2 \left(6 \beta ^2+14 \beta +3\right)+12 \alpha  \beta  \left(\beta ^2-2 \beta -1\right)-15 \beta ^2+36 \beta -13=0.\end{equation}
Solving the equations \eqref{eq1001} and \eqref{eq1002}, we have $\ga=\frac 12 $ and $\gb=\frac 23$. Then, we have
$\tilde p=(\frac{7}{24},\frac{7}{24 \sqrt{3}}), \, \tilde q=(\frac{17}{24},\frac{7}{24 \sqrt{3}}), \te{ and } \tilde r=(\frac{1}{2},\frac{11}{12 \sqrt{3}})$. Moreover, $\tilde c=(\frac{1}{4},\frac{\sqrt{3}}{4})$ and $\tilde d=(\frac{3}{4},\frac{\sqrt{3}}{4})$. Here the equation of the line $OB$ is $x_2=\sqrt{3} x_1$, and the equation of the line $CM$ is $x_2=-\frac{x_1-1}{\sqrt{3}}$. Thus, if $V_3(P)$ is the quantization error due to the point $P$ in its Voronoi region, then we have
\begin{align*}
&V_3(P)=\int _0^{\frac{1}{4}}\int _0^{\sqrt{3} x_1}\frac{4 ((x_1-\frac{7}{24}){}^2+(x_2-\frac{7}{24 \sqrt{3}}){}^2)}{\sqrt{3}}dx_2dx_1\\
&+\int _{\frac{1}{4}}^{\frac{1}{2}}\int _0^{-\frac{x_1-1}{\sqrt{3}}}\frac{4 ((x_1-\frac{7}{24}){}^2+(x_2-\frac{7}{24 \sqrt{3}}){}^2)}{\sqrt{3}}dx_2dx_1=\frac{11}{1296}.
\end{align*}
 Due to the uniform distribution and the symmetry of the points, we have $V_3(P)=V_3(Q)=V_3(R)=\frac{11}{1296}$. Thus, the set $\set{(\frac{7}{24}, \frac{7}{24 \sqrt{3}}), \, (\frac{17}{24}, \frac{7}{24 \sqrt{3}}), (\frac{1}{2}, \frac{11}{12 \sqrt{3}})}$ forms an optimal set of three-means with quantization error $V_3=3 \times \frac{11}{1296}=\frac{11}{432}$. Notice that the points $(\frac{7}{24}, \frac{7}{24 \sqrt{3}})$ and $(\frac{17}{24}, \frac{7}{24 \sqrt{3}})$ lie on the medians passing through the vertices $O$ and $A$ respectively, and the three points in this case form an equilateral triangle having the sides parallel to the sides of the original triangle.  Thus, due to symmetry, we can say that the set $\set{(\frac{7}{24}, \frac{7}{24 \sqrt{3}}), \, (\frac{17}{24}, \frac{7}{24 \sqrt{3}}), (\frac{1}{2}, \frac{11}{12 \sqrt{3}})}$ is the only optimal set of three-means. Hence, the proof of the theorem is complete.
\end{proof}

\section{Optimal sets of 4-means}

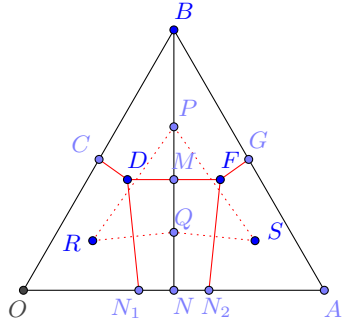
\begin{figure}
\begin{tikzpicture}[line cap=round,line join=round,>=triangle 45,x=4.0cm,y=4.0cm]
\clip(-0.1,-0.15) rectangle (1.1,1.);
\draw (0.5,0.866025)-- (0.,0.);
\draw (0.5,0.866025)-- (1.,0.);
\draw (0.5,0.866025)-- (0.5,0.);
\draw (0.,0.)-- (1.,0.);
\draw [color=ffqqqq] (0.2513528339815584,0.43535567609775816)-- (0.3456800438177196,0.36816779157149887);
\draw [color=ffqqqq] (0.3456800438177196,0.36816779157149887)-- (0.6543199561822803,0.36816779157149887);
\draw [color=ffqqqq] (0.7486471660184416,0.43535567609775816)-- (0.6543199561822803,0.36816779157149887);
\draw [color=ffqqqq] (0.3456800438177196,0.36816779157149887)-- (0.3834684123722554,0.);
\draw [color=ffqqqq] (0.6543199561822803,0.36816779157149887)-- (0.6165315876277446,0.);
\draw [dotted,color=ffqqqq] (0.5,0.5437239171503262)-- (0.23023301493672835,0.16495622450758732);
\draw [dotted,color=ffqqqq] (0.23023301493672835,0.16495622450758732)-- (0.5,0.19271412482179784);
\draw [dotted,color=ffqqqq] (0.7697669850632717,0.16495622450758732)-- (0.5,0.19271412482179784);
\draw [dotted,color=ffqqqq] (0.5,0.5437239171503262)-- (0.7697669850632717,0.16495622450758732);
\begin{scriptsize}
\draw [fill=uuuuuu] (0.,0.) circle (1.5pt);
\draw[color=uuuuuu] (-0.018530431804397132,-0.06122617325881305) node {$O$};
\draw [fill=xdxdff] (1.,0.) circle (1.5pt);
\draw[color=xdxdff] (1.0290190523947036,-0.06122617325881305) node {$A$};
\draw [fill=qqqqff] (0.5,0.866025) circle (1.5pt);
\draw[color=qqqqff] (0.5333036714790577,0.9302045885724783) node {$B$};
\draw [fill=xdxdff] (0.2513528339815584,0.43535567609775816) circle (1.5pt);
\draw[color=xdxdff] (0.1919147770748865,0.4859313698273242) node {$C$};
\draw [fill=qqqqff] (0.23023301493672835,0.16495622450758732) circle (1.5pt);
\draw[color=qqqqff] (0.1591788556936646,0.15857215601510533) node {$R$};
\draw [fill=qqqqff] (0.7697669850632717,0.16495622450758732) circle (1.5pt);
\draw[color=qqqqff] (0.8372800843046896,0.19598463759364462) node {$S$};
\draw [fill=xdxdff] (0.7486471660184416,0.43535567609775816) circle (1.5pt);
\draw[color=xdxdff] (0.7811613619368806,0.4999610504192764) node {$G$};
\draw [fill=xdxdff] (0.5,0.) circle (1.5pt);
\draw[color=xdxdff] (0.5333036714790577,-0.05187305286417823) node {$N$};
\draw [fill=xdxdff] (0.5,0.5437239171503262) circle (1.5pt);
\draw[color=xdxdff] (0.5473333520710099,0.6075219349575769) node {$P$};
\draw [fill=xdxdff] (0.5,0.36821902098606213) circle (1.5pt);
\draw[color=xdxdff] (0.5333036714790577,0.4298126474595152) node {$M$};
\draw [fill=xdxdff] (0.5,0.19271412482179784) circle (1.5pt);
\draw[color=xdxdff] (0.5333036714790577,0.23807367936950133) node {$Q$};
\draw [fill=qqqqff] (0.3456800438177196,0.36816779157149887) circle (1.5pt);
\draw[color=qqqqff] (0.37897718496758304,0.43448920765683263) node {$D$};
\draw [fill=qqqqff] (0.6543199561822803,0.36816779157149887) circle (1.5pt);
\draw[color=qqqqff] (0.6876301579905323,0.43448920765683263) node {$F$};
\draw [fill=xdxdff] (0.3834684123722554,0.) circle (1.5pt);
\draw[color=xdxdff] (0.3415647033890437,-0.06122617325881305) node {$N_1$};
\draw [fill=xdxdff] (0.6165315876277446,0.) circle (1.5pt);
\draw[color=xdxdff] (0.6408645560173583,-0.05654961306149564) node {$N_2$};
\end{scriptsize}
\end{tikzpicture}
\caption{Optimal configuration of four points $P$, $Q$, $R$ and $S$.}
\end{figure}

In this section we calculate the optimal sets of four-means. Let $OAB$ be the equilateral triangle with vertices $(0, 0)$, $(1, 0)$ and $(\frac 1 2, \frac{\sqrt 3} {2})$. As shown in Figure 3, let $BN$ be the median of the triangle passing through the vertex $B$ \
which cuts $OA$ at the point $N$. Let $\set{P, \, Q, \, R, \, S}$
be an optimal set of four-means, where $P, \, Q$ are on the median $BN$; and $R, \, S$
are in the opposite sides of the median. Notice that, our assumption is also verified by a numerical search algorithm as mentioned in the next section.
Let $CD$ be the boundary of the Voronoi
regions of the points $P$ and $R$, $DF$ be the boundary of the Voronoi
regions of the points $P$ and $Q$ which cuts the median $BN$ at the point $M$,
$FG$ be the boundary of the Voronoi regions of the points $P$ and $S$.
Let $DN_1$ and $FN_2$ be the boundaries of the
Voronoi regions of the points $R$, $Q$ and  $Q$, $S$ respectively. Let $\ga$, $\gb$, $\gg$, $\gd$  be four constants such
that  $BC = BG = \ga$, $ON_1=AN_2=\gd$, $BM = \frac{\sqrt 3}{2}\gb$; $x_1$-coordinate of $D$ be $\gg$,
and so due to symmetry $x_1$-coordinate of $F$ is $1 -\gg$. Then we have,

\[ \begin{array}{llll}
  \tilde a=(1, 0), & \tilde b=( \frac{1}{2},\frac{\sqrt{3}}{2}), & \tilde n=( \frac{1}{2},0 ),  \\
\tilde c=(1-\alpha ) \tilde b, & \tilde d=( \gamma ,\frac{1}{2} \sqrt{3} (1-\beta )), & \tilde g=\tilde a \alpha +(1-\alpha ) \tilde b,\\
\tilde m=\tilde b (1-\beta )+\beta  \tilde n, & \tilde n_1=( \delta ,0), & \tilde n_2=( 1-\delta ,0),\\
 \tilde f=( 1-\gamma ,\frac{1}{2} \sqrt{3} (1-\beta )).
\end{array}
\]
The equation of the line $CD$ is $x_2=\frac{1}{2} \sqrt{3} (1-\beta )+\frac{\sqrt{3} (\alpha -\beta ) (x_1-\gamma )}{\alpha +2 \gamma -1}$. The equation of the line $BD$ is $x_2=\frac{\sqrt{3} \beta  (x_1-\frac{1}{2})}{1-2 \gamma }+\frac{\sqrt{3}}{2}$. If Ar$_1$ is the area of the triangle $BCD$, then
\begin{align*} \text{Ar}_1&=\int _{\frac{1-\alpha }{2}}^{\gamma }\int _{\frac{1}{2} \sqrt{3} (1-\beta )+\frac{\sqrt{3} (\alpha -\beta ) (x_1-\gamma)}{\alpha +2 \gamma -1}}^{\sqrt{3} x_1}1\,dx_2dx_1+\int _{\gamma }^{\frac{1}{2}}\int _{\frac{\sqrt{3} \beta (x_1-\frac{1}{2})}{1-2 \gamma }+\frac{\sqrt{3}}{2}}^{\sqrt{3} x_1}1 \,dx_2dx_1\\
&=\frac{1}{8} \sqrt{3} (\alpha +2 \gamma -1) (\beta +2 \gamma -1)-\frac{1}{4} \sqrt{3} \beta  \gamma -\frac{\sqrt{3} \beta  \gamma }{2 (1-2 \gamma )}+\frac{\sqrt{3} \beta }{4 (1-2 \gamma )}-\frac{\sqrt{3} \beta }{8}-\frac{1}{2} \sqrt{3} \gamma ^2\\
& \qquad \qquad +\frac{\sqrt{3} \gamma }{2}-\frac{\sqrt{3}}{8}.\end{align*}
If Ar$_2$ is the area of the triangle $BDF$, then
$\te{Ar}_2=\frac{\sqrt{3} \beta  (1-2 \gamma )}{2\cdot 2}=\frac{1}{4} \sqrt{3} \beta  (1-2 \gamma ).$
If Ar$_3$ is the area of the triangle $BFG$, then $\te{Ar}_3=\te{Ar}_1.$
If  Ar$_4$ is the area of the triangle $OCD$, then
\begin{align*}
\te{Ar}_4&=\int _0^{\frac{1-\alpha }{2}}\int _{\frac{\sqrt{3} (1-\beta ) x_1}{2 \gamma }}^{\sqrt{3} x_1}1\,dx_2dx_1+\int _{\frac{1-\alpha }{2}}^{\gamma }\int _{\frac{\sqrt{3} (1-\beta ) x_1}{2 \gamma }}^{\frac{1}{2} \sqrt{3} (1-\beta )+\frac{\sqrt{3} (\alpha -\beta ) (x_1-\gamma )}{\alpha +2 \gamma -1}}1\,dx_2dx_1\\
&=\frac{\sqrt{3} (\alpha -1)^2 (\beta +2 \gamma -1)}{16 \gamma }-\frac{\sqrt{3} (\alpha -1) (\alpha +2 \gamma -1) (\beta +2 \gamma -1)}{16 \gamma }.
\end{align*}
If Ar$_5$ is the area of the triangle $ODN_1$,
then $\text{Ar}_5=\frac{\sqrt{3} (1-\beta ) \delta }{2\cdot 2}.$
If Ar$_6$ is the area of the triangle $DN_1N_2$, then
$\text{Ar}_6= \frac{\sqrt{3} (1-\beta ) (1-2 \delta )}{2\cdot 2}=\frac{1}{4} \sqrt{3} (1-\beta ) (1-2 \delta ).$
If Ar$_7$ is the area of the triangle $DFN_2$, then
$\text{Ar}_7=\frac{\sqrt{3} (1-\beta ) (1-2 \gamma )}{2\cdot 2}=\frac{1}{4} \sqrt{3} (1-\beta ) (1-2 \gamma ).$
Notice  that due to symmetry, if Ar$_8$ is the area of the triangle $FN_2A$ and Ar$_9$ the area of the triangle $FAG$, then
$\te{Ar}_8=\te{Ar}_5  \te{ and } \te{Ar}_9=\te{Ar}_4.$
As $P$, $Q$, $R$, $S$ are assumed to form an optimal set of four-means, they are also the centroids of their corresponding Voronoi regions associated with the density function $f(x_1, x_2)$ which is constant due to the uniform distribution over the triangle.
Thus, $P, \, Q, \, R, \, S$ are respectively the centroids of the pentagon $BCDFG$, quadrilaterals $DN_1N_2F$, $OCDN_1$, and $AN_2FG$. Hence,  we have
\begin{align*}
\tilde p&=\frac{\frac{1}{3} \text{Ar}_1 (\tilde b+\tilde c+\tilde d)+\frac{1}{3} \text{Ar}_2 (\tilde b+\tilde d+\tilde f)+\frac{1}{3} \text{Ar}_3 (\tilde b+\tilde f+\tilde g)}{\text{Ar}_1+\text{Ar}_2+\text{Ar}_3},\\
\tilde q &=\frac{\frac{1}{3} \text{Ar}_7 (\tilde d+\tilde f+\tilde n_2)+\frac{1}{3} \text{Ar}_6 (\tilde d+\tilde n_1+\tilde n_2)}{\text{Ar}_6+\text{Int}_7},\\
\tilde r&=\frac{\frac{1}{3} \text{Ar}_4 (\tilde c+\tilde d)+\frac{1}{3} \text{Ar}_5 (\tilde d+\tilde  n_1)}{\text{Ar}_4+\text{Ar}_5},\\
\tilde s&=\frac{\frac{1}{3} \text{Ar}_9 (\tilde a+\tilde f+\tilde g)+\frac{1}{3} \text{Ar}_8 (\tilde a+\tilde f+\tilde n_2)}{\text{Ar}_8+\text{Ar}_9}.
\end{align*}
Write $
\text{Q1}:=\rho (\tilde p, \tilde c)-\rho(\tilde c, \tilde r), \, \text{Q2}: =\rho(\tilde p, \tilde d)-\rho (\tilde d, \tilde r), \, \text{Q3}:=\rho (\tilde q, \tilde d)-\rho(\tilde d, \tilde r)$, and $\text{Q4}:=\rho(\tilde q, \tilde n_1)-\rho(\tilde n_1, \tilde r).
$
Since the line passing through the boundary of the Voronoi regions of any two points in an optimal set of $n$-means, $n\geq 2$, is the perpendicular bisector of the line segment joining the two points, we must have $Q1=0$, $Q2=0$, $Q3=0$ and $Q4=0$.   Using Mathematica, we solve these four equations for the parameters $\ga, \, \gb, \, \gg$ and $\gd$ up to 20 decimal places and obtain
\begin{align*}
&\alpha= 0.49729450782679201845, \, \beta=0.57487645285849021867,\\
&\gamma= 0.34568004381771961464, \, \delta =0.38346841237225538981.
\end{align*}
Now, using the above values of $\ga$, $\gb$, $\gg$, and $\gd$ we obtain the position vectors $\tilde p$, $\tilde q$, $\tilde r$ and $\tilde s$ as follows:
\begin{align*}
\tilde p&=(\frac{1}{2},0.5436907490155839431), \\
\tilde q&=(\frac{1}{2},0.1926448341274137497 ) , \\
\tilde r&= ( 0.2302330149367283460,0.1649562245075873150 ) , \\
\tilde s&=( 0.769766985063271654,0.1649562245075873150 ).
\end{align*}
Hence, the points $(\frac{1}{2},0.5436907490155839431), \, (\frac{1}{2},0.1926448341274137497), \\ (0.2302330149367283460, 0.1649562245075873150)$ \\
and $(0.769766985063271654,0.1649562245075873150)$  form an optimal set of four-means. Notice  that due to symmetry there are three optimal sets of four-means. As before, we can also calculate the quantization error in this case.

\begin{figure}
\centerline{\includegraphics[width=300pt]{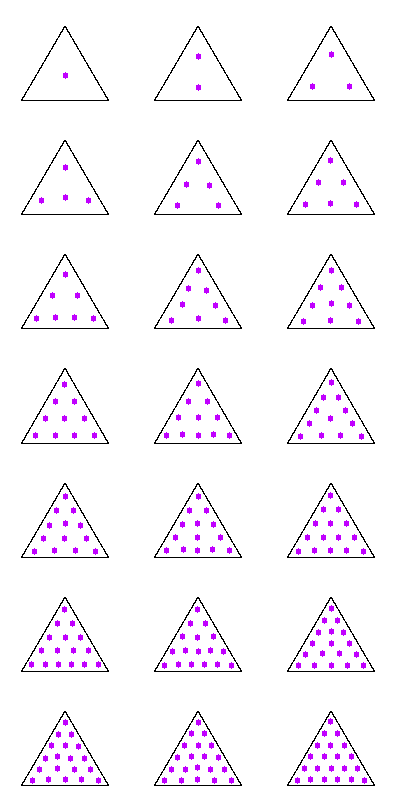}}
\caption{Results of a numerical search algorithm for $1\leq n\leq 21$ points, rotated so that the symmetry axis is vertical.\label{f:pm}}
\end{figure}

\section{Optimal sets of $n$-means}

As the number of points increases, so does the number of algebraic equations to be solved.  We apply a numerical search algorithm that makes random shifts to the point locations,
accepting better configurations, and gradually decreasing the shift amplitude in the absence of improvement.  In Figure~\ref{f:pm} we present the results of this numerical search
for $n\leq 21$ points.  Based on these results we make the following conjectures (``most'' means a set with density greater than 1/2):

\begin{conj}
For most $n$, there is an optimal configuration with at least one line of symmetry.
\end{conj}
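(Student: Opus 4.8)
Since the statement is flagged as a conjecture, what follows is a strategy with the point of failure indicated, not a finished argument. The plan is to combine an exact, symmetry-reduced analysis in a small-$n$ regime with an asymptotic, lattice-based argument for large $n$, and then a density count. First, the symmetry bookkeeping. The equilateral triangle $\tri$ carrying the uniform distribution $P$ has isometry group $S_3$ (the dihedral group of order $6$), and $S_3$ acts on the set $\mathcal{O}_n$ of optimal sets of $n$-means, since $V_n$ and the centroid and perpendicular-bisector conditions of Proposition~\ref{prop10} are $S_3$-equivariant. A set $\alpha\in\mathcal{O}_n$ has at least one line of symmetry exactly when its stabilizer in $S_3$ contains a reflection, i.e.\ when its $S_3$-orbit has size $1$ or $3$ rather than $2$ or $6$. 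So the task is: for a set of $n$ of lower density exceeding $1/2$, exhibit one optimal configuration fixed by a reflection.

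For small $n$ the method already used in Theorem~\ref{th2} and the following two sections does the job: assume a reflection symmetry, parametrise the Voronoi boundaries by a handful of scalars, impose the $\rho$-equalities coming from Proposition~\ref{prop10}, solve the resulting algebraic system (by hand or with a computer algebra system), and then verify that the symmetric solution beats every configuration whose stabilizer is trivial. Carrying this out case by case, and upgrading the numerical search behind Figure~\ref{f:pm} to rigorous interval arithmetic, would establish the statement for every $n\le 21$ and a few values beyond. This is finitely many cases and contributes nothing to the density on its own; its role is to anchor the asymptotic argument and to dispose of small exceptional $n$.

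The real content is large $n$. Here I would use Zador's theorem ($nV_n\to c$ for the uniform law on a planar set with rectifiable boundary) together with the feature, special to $\tri$, that its three sides are parallel to the three axes of a triangular lattice, so that for suitable $n$ a ``lattice chunk'' of hexagonal Voronoi cells tiles $\tri$ with only an $O(\sqrt n)$-size boundary correction and matches the asymptotically optimal hexagonal quantizer up to $o(1/n)$. The plan is then: (i) for $n$ in a set of integers of density exceeding $1/2$ choose such a lattice chunk that is invariant under one reflection of $\tri$ and whose boundary rows sit symmetrically against the edges; (ii) estimate its quantization error finely enough — pairing the hexagonal interior estimate with an explicit treatment of the boundary strip, in the spirit of the $n\to\infty$ bound announced in the introduction — to certify that it attains $V_n$ for those $n$, invoking the two-dimensional case of Gersho's conjecture (Fejes T\'oth, Newman) to exclude non-hexagonal competitors; and (iii) show that, among the finitely many ways of padding or trimming the chunk to have exactly $n$ points, a reflection-symmetric choice is optimal for a subset of density exceeding $1/2$.

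The main obstacle is the quantitative rigidity needed in (ii): knowing the value of the hexagonal quantization constant does not by itself force a \emph{finite}-$n$ near-optimal configuration to be close, cell by cell, to a hexagonal lattice chunk, and such a stability statement for two-dimensional quantization is not available in the strength required — without it one cannot rule out that the true optimizer for many $n$ is a lopsided near-lattice configuration with trivial stabilizer. Step (iii) is a second, independent obstacle: even granting the lattice picture, deciding which cuts of a triangular-lattice region minimise the added boundary energy is a delicate combinatorial problem, and it is not clear that symmetric cuts win for a full half of all $n$ rather than, say, a third, which would still leave the conjecture open. Accordingly the first realistic targets I would pursue are (a) the explicit but density-zero family $n=k^2$, where equal subdivision of $\tri$ into $k^2$ congruent small triangles gives a candidate with the full $S_3$ symmetry, together with its $n=k^2\pm O(k)$ perturbations obtained by adding or deleting points along a median, and (b) the rigorous $n\le 21$ verification above; bridging the gap between such sparse or finite families and a genuine density-exceeding-$1/2$ statement is exactly what a complete proof must supply.
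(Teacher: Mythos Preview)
The paper does not prove this statement at all: it is stated purely as a conjecture, supported only by the numerical search for $n\le 21$ displayed in Figure~\ref{f:pm}, with no accompanying argument, heuristic, or proof strategy. There is therefore nothing in the paper to compare your proposal against.

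You correctly recognised the statement as a conjecture and offered a strategy rather than a proof, honestly flagging the failure points. Your outline --- the $S_3$-action bookkeeping, the small-$n$ symmetry-reduced algebraic analysis, and the large-$n$ lattice-chunk approach via Fejes T\'oth/Newman --- goes well beyond anything the paper attempts. The two obstacles you isolate are real: (ii) a quantitative finite-$n$ stability result forcing near-optimal planar quantizers to be close to a hexagonal lattice is not available in the literature, and without it one cannot exclude asymmetric near-lattice optimizers; (iii) even granting the lattice picture, showing that symmetric trimmings win on a set of density exceeding $1/2$ is a separate combinatorial problem with no obvious handle. These are exactly the reasons the statement remains open, and your identification of them is accurate. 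One minor correction: in your target (a), the natural triangular-lattice family is $n=k(k+1)/2$ (as in Theorem~\ref{th:6.3}), not $n=k^2$; the $k^2$ congruent-subtriangle decomposition does not directly give $k^2$ centroids forming an optimal set.
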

In Figure~\ref{f:pm} this line of symmetry is chosen to be vertical.  In each case the number of points on each side of the vertical line is equal, however for $n=8$ and $n=19$,
the locations of points do not appear to be quite symmetrical.

We also note that when $n$ is a triangular  number, the points lie very close to a triangular lattice, and for other values, are located in identifiable rows, and are close to the
union of two subsets of triangular lattices.  Specifically

\begin{conj}
For most $n$, there is an optimal configuration with $N=\lfloor\sqrt{2n}\rfloor$ rows.  The $j$th row has $j$ points for $j\leq J$ where $J=N-|n-N(N+1)/2|$.  If
$n>N(N+1)/2$ the rows with $j>J$ each have one extra point (so, the jth row has $j+1$ points), while if $n<N(N+1)/2$ they each have one fewer point (so, the
$j$th row has $j-1$ points).
\end{conj}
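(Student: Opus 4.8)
The plan is to split the statement into an \emph{upper bound by explicit construction}, a \emph{matching lower bound}, and a \emph{combinatorial identification} of the row sizes, while being candid that only the first part and an asymptotic form of the second look reachable with current tools; the sharp version, valid for a set of $n$ of density exceeding $1/2$, is really a research programme, which is why the authors state it as a conjecture rather than a theorem. Throughout one uses that an optimal set of $n$-means exists (compactness, as recalled in the introduction) and satisfies the centroid condition of Proposition~\ref{prop10}.

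First I would build the candidate configuration. Since every interior angle of $\tri$ equals $\pi/3$, the triangle is tiled exactly by $N^2$ congruent small equilateral triangles of side $1/N$, grouped into $N$ horizontal rows, the $i$th row from the apex containing $i$ upward-pointing and $i-1$ downward-pointing cells, so there are $\sum_{i=1}^N i = N(N+1)/2$ upward cells. Placing one point at (a small perturbation of) the centroid of each upward cell therefore uses a triangular number of points and yields, away from $\partial\tri$, a scaled copy of the triangular lattice whose Voronoi diagram is the hexagonal tiling, while near $\partial\tri$ the cells are precisely the boundary-adapted polygons that the $\pi/3$ angles make possible with no wasted area. For $n$ not a triangular number one inserts or deletes points one row at a time; checking that $N=\lfloor\sqrt{2n}\rfloor$ is the number of rows minimising the construction's cost, and that the cheapest way to absorb $|n-N(N+1)/2|$ extra or missing points is to modify the longest rows, is a direct computation that reproduces the $J=N-|n-N(N+1)/2|$ prescription. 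Re-optimising the point locations within this fixed combinatorial type, by solving the centroid/perpendicular-bisector system as in Sections 3--5, yields an explicit bound $V_n\le W_n$.

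Next I would prove a lower bound $V_n\ge W_n(1+o(1))$, and ideally $V_n\ge W_n$ outright for the $n$ in question. The asymptotic half follows from the Zador--Bucklew--Wise scaling $nV_n\to c\,\gl(\tri)$, with $c$ the second moment of the unit-area regular hexagon, together with the two-dimensional case of Gersho's conjecture, established by Fejes T\'oth (the regular hexagon minimises the normalised second moment among cells that tile the plane): any near-optimal configuration must be locally hexagonal outside a vanishing fraction of $\tri$, and because the $\pi/3$ boundary angles let the hexagonal pattern meet $\partial\tri$ with no defect, the number of rows and their sizes are pinned down up to lower-order corrections, recovering Theorem~\ref{th:6.3}. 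Upgrading this to an exact statement for a fixed $n$ would require a \emph{quantitative} planar Gersho inequality (cells far from regular hexagonal lose a definite amount of second moment), then propagating this rigidity from the interior to the boundary rows using the centroid conditions of Proposition~\ref{prop10}, and finally eliminating the finitely many surviving combinatorial competitors by a Mathematica-assisted case analysis in the style of Section~5.

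The main obstacle is exactly this last upgrade. No quantitative planar Gersho inequality strong enough to force the global row structure for \emph{every} $n$ (or even a positive-density set of $n$) is known; the established $2$-D optimality of the hexagon is an averaged statement, and near-optimal configurations can be locally distorted in ways the averaged bound cannot detect -- the same phenomenon that produces the genuinely non-symmetric optima seen at $n=8$ and $n=19$ in Figure~\ref{f:pm}, and the reason the conjecture is hedged with ``most $n$''. A realistic intermediate target, and the one I would pursue first, is the case $n=N(N+1)/2$: there the candidate carries the full symmetry group of $\tri$, the competing combinatorial types are few, and the symmetry reduction used for $n\le 4$ together with a finite symbolic computation stands a real chance of closing the argument.
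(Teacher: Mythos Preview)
The statement is a \emph{conjecture} in the paper, not a theorem, and the paper offers no proof of it whatsoever --- only the numerical evidence of Figure~\ref{f:pm} for $n\le 21$, together with the remark that $n=12$ and $n=14$ already appear to violate the pattern. There is therefore no ``paper's own proof'' against which to compare your attempt.

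You clearly recognise this, and what you have written is not a proof but an honest sketch of a research programme: an explicit construction giving an upper bound (essentially the content of Theorem~\ref{th:6.3}, extended heuristically to non-triangular $n$), an asymptotic lower bound via the Zador scaling and Fejes T\'oth's planar hexagonal optimality, and a candid admission that no known quantitative rigidity result closes the gap to the exact combinatorial row structure for any individual $n$, let alone for a positive-density set of $n$. Your diagnosis of the obstruction --- that the available Gersho-type results are averaged statements which cannot detect the local distortions responsible for the non-symmetric optima at $n=8,19$ --- is exactly right, and the intermediate target you propose (triangular $n$, where the full symmetry of $\tri$ can be exploited) is the natural first case. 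But none of this constitutes a proof, nor does the paper claim one; your proposal and the paper are in agreement that the statement is, at present, a conjecture.
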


Notice that $\lfloor\sqrt{2n}\rfloor$ identifies the closest triangular  number to a natural number $n$.  The conjecture is not stated for all $n$ as possible exceptions are $n=12$
(wrong number of rows) and $n=14$ (wrong distribution of points in rows).

When $n$ is a triangular  number $N(N+1)/2$, the locations are close to a triangular lattice, and it is possible to obtain a good bound on the quantization error:
\begin{theorem}\label{th:6.3}
When $n=N(N+1)/2$ for some positive integer $N\geq 3$, the quantization error is controlled by the bound
\begin{equation*}
V_n\leq \frac{45N^3-28\sqrt{21}N^2+(301-28\sqrt{21})N-98}{324N^3(N-1)^2}=\frac{5}{36 N^2}-\frac{14\sqrt{21}-45}{162 N^3}+O(N^{-4}).
\end{equation*}

\end{theorem}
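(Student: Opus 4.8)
Since the quantity in the statement is an \emph{upper} bound, it suffices to exhibit one set $\alpha=\{a_1,\dots,a_n\}$ of $n=N(N+1)/2$ points and one Borel partition $\{A_1,\dots,A_n\}$ of $\tri$; then by the very definition of $V_n$,
\[
V_n\;\le\;\sum_{i=1}^{n}\int_{A_i}\|x-a_i\|^2\,dP
\;=\;\frac{4}{\sqrt3}\sum_{i=1}^{n}\int_{A_i}\|x-a_i\|^2\,d\lambda ,
\]
the last step because $f\equiv 4/\sqrt3$ on $\tri$. For each cell we take $a_i$ to be its Euclidean centroid, so that $\int_{A_i}\|x-a_i\|^2\,d\lambda=|A_i|^2\,\widehat G(A_i)$, where $\widehat G(A):=|A|^{-2}\int_A\|x-c(A)\|^2\,d\lambda$ is the dimensionless, scale-invariant second moment of the shape $A$; recall that over all plane regions $\widehat G$ is minimized by the regular hexagon, with value $\widehat G_{\mathrm{hex}}=\tfrac{5\sqrt3}{54}$, while an equilateral triangle or a $60^\circ$ rhombus has $\widehat G=\tfrac{\sqrt3}{9}$. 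Thus the whole task is to choose a good partition; we never have to check that $\alpha$ is optimal or that $\{A_i\}$ is its Voronoi partition.

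The plan is to take the partition to be, away from $\partial\tri$, a honeycomb of congruent regular hexagons centred on a triangular lattice of spacing $h=(1+o(1))/N$ (equivalently, the interior $a_i$ form a piece of a triangular lattice whose Voronoi cells are these hexagons), and, in a layer of width $O(1/N)$ along the three sides, a matching family of isosceles triangles and trapezoids, together with $O(1)$ special cells at the three $60^\circ$ corners and at the place(s) where the honeycomb is truncated. The interior contains $\Theta(N^2)$ hexagons, each of area $\sim\tfrac{\sqrt3}{2N^2}$; summing $\tfrac{4}{\sqrt3}|A_i|^2\widehat G_{\mathrm{hex}}$ over them and using $n=\tfrac{N^2}{2}+\tfrac N2$ to fix $h$ precisely produces the main term $\tfrac{5}{36N^2}$ together with part of the $N^{-3}$ coefficient. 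The boundary layer contains $\Theta(N)$ cells, each of area $O(N^{-2})$ and hence of individual contribution $O(N^{-4})$, so it contributes at order $N^{-3}$; the $O(1)$ corner/apex cells contribute only $O(N^{-4})$ and may be absorbed into the error term.

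Because $n=\tfrac{N(N+1)}2$ exceeds the number $\tfrac{N^2}{2}$ of hexagons that would "fill'' $\tri$ by about $\tfrac N2$, there are spare points to spend in the boundary layer, so the depth of that layer — equivalently, how far the outermost lattice lines are pulled in from the sides, rescaled by $N$ — is a free parameter $t$. Writing the boundary contribution explicitly in the form $N^{-3}\big(A/t+Bt+C\big)+O(N^{-4})$, with $A,B,C$ coming from the triangle/trapezoid second moments along an edge, and minimizing $A/t+Bt$ over $t>0$ gives $2\sqrt{AB}$; with the actual values of $A$ and $B$ this square root is exactly where $\sqrt{21}=\sqrt3\,\sqrt7$ enters, producing the coefficient $-\tfrac{14\sqrt{21}-45}{162}$ of $N^{-3}$. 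Expanding $\tfrac{1}{N(N+1)}=N^{-2}-N^{-3}+O(N^{-4})$ throughout then puts the estimate into both displayed forms, with all discarded terms genuinely $O(N^{-4})$.

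The main obstacle is the boundary bookkeeping: one must pin down exactly how many triangular versus trapezoidal cells border each side, exactly where the honeycomb is cut off relative to the sides, and how the three $60^\circ$ corners are tiled, so that (i) the $A_i$ really do partition $\tri$, (ii) their number is \emph{exactly} $N(N+1)/2$, and (iii) the order-$N^{-3}$ part of $\sum_i|A_i|^2\widehat G(A_i)$ is evaluated exactly, not merely to leading order; only then can the one-parameter optimization be carried through to the stated constant. The hypothesis $N\ge 3$ is what guarantees a genuine interior honeycomb and that the boundary construction does not degenerate.
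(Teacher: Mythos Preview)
Your overall strategy---exhibit one explicit configuration and bound $V_n$ by its cost---is correct, and your asymptotic reasoning recovers the leading $5/(36N^2)$. But the proposal does not prove the theorem as stated. The theorem asserts an \emph{exact} closed-form bound,
\[
V_n\le \frac{45N^3-28\sqrt{21}N^2+(301-28\sqrt{21})N-98}{324N^3(N-1)^2},
\]
valid for every $N\ge 3$; the series after the equals sign is just the Laurent expansion of this rational function. Your argument works only to $O(N^{-4})$ precision: the number of interior hexagons is ``$\Theta(N^2)$'', boundary cells are ``$\Theta(N)$'', corner cells ``$O(1)$'', and you explicitly replace $1/(N(N+1))$ by $N^{-2}-N^{-3}+O(N^{-4})$. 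An inequality $V_n\le c_2N^{-2}+c_3N^{-3}+O(N^{-4})$ with an unspecified implied constant is not a bound for any finite $N$, so this route cannot reach the stated result without converting every $O(\cdot)$ into an explicit term---at which point you need a fully specified partition anyway. You yourself flag ``the boundary bookkeeping'' as unresolved; that is precisely the missing content.

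The paper's argument is concrete from the outset. The $n=N(N+1)/2$ points sit in exactly $N$ rows of a triangular lattice of spacing $d$, the outermost rows at perpendicular distance $a$ from the sides, so $(N-1)d+2\sqrt3\,a=1$. There are precisely $\tbinom{N-2}{2}$ interior points (full hexagonal cells), $3(N-2)$ edge points (half-hexagon plus $d\times a$ rectangle), and $3$ corner points; the cost of each cell type is computed exactly by integrating over $\pi/6$- and $\pi/3$-right triangles and rectangles. Summing gives a closed polynomial in $a$ (with $d$ eliminated); substituting the specific value $a=\sqrt7/(6N)$---chosen because it optimises the leading terms, though any choice yields \emph{some} valid bound---produces the exact rational function. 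The $\sqrt{21}$ arises from this substitution of $a=\sqrt7/(6N)$ into terms carrying $\sqrt3$, not from an asymptotic $A/t+Bt$ minimisation. Your proposed boundary geometry (hexagons truncated to trapezoids and triangles, plus unspecified corner patches) is a genuinely different partition; even if made precise it would yield a different, and not obviously equal, closed-form bound.
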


\begin{figure}
\centerline{\includegraphics[width=400pt]{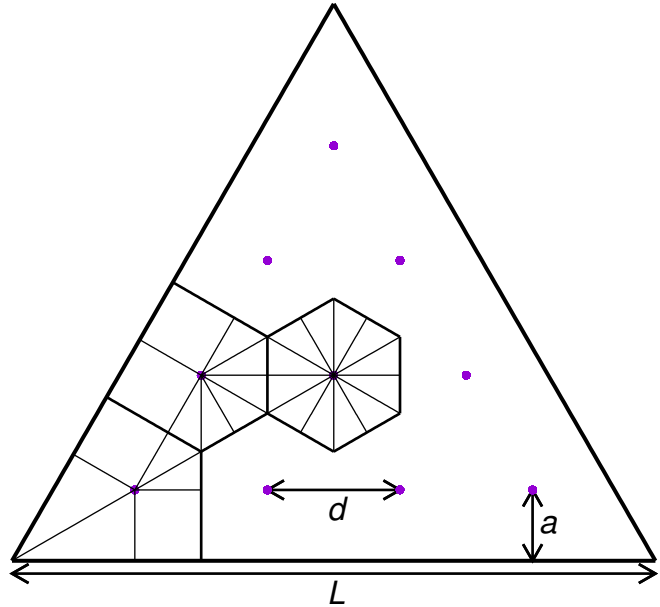}}
\caption{\label{f:5} The construction in the proof of  Theorem~\protect\ref{th:6.3}, illustrating the split of the Voronoi regions of centre, edge and corner points
into triangles and rectangles.  The value of $a$ chosen is the optimal value $a_{\mbox{opt}}$ defined below.}
\end{figure}

\begin{proof}
The proof is by direct calculation for the specific configuration shown in Figure~\ref{f:5}.  The points lie on a triangular lattice aligned with the triangular domain and have Voronoi regions as shown.  There are two parameters, the lattice spacing $d$, and the distance from any of the edge or corner
points to the edge of the triangle $a$.  We set $L$ to be the side length of the large triangle (set equal to unity at the end), so that the area is
$\mbox{Area}=L^2\sqrt{3}/4$. We then have
\begin{equation*}
L=(N-1)d+2\sqrt{3}a.
\end{equation*}
It is convenient to make $d$ the subject of this equation and substitute into the expressions below. Placing a point at the origin, we can find the quantization error
due to right triangular or rectangular domains:
\begin{align*}  V_{\pi/6}(r)&=\int_0^r dx \int_0^{x/\sqrt{3}} dy \frac{x^2+y^2}{\mbox{Area}}=\frac{10 r^4}{27 L^2},\\
V_{\pi/3}(r)&=\int_0^r dx \int_0^{x\sqrt{3}} dy \frac{x^2+y^2}{\mbox{Area}}=\frac{2r^4}{L^2},\\
V_{\mbox{rect}}(l,w)&=\int_0^l dx \int_0^w dy \frac{x^2+y^2}{\mbox{Area}}=\frac{4lw(l^2+w^2)}{3\sqrt{3}L^2}.
\end{align*}
Then, each point has a combination of these contributions
$V_{\mbox{center}}=12 V_{\pi/6}(d/2), \, V_{\mbox{edge}}=6V_{\pi/6}(d/2)+2V_{\mbox{rect}}(d/2,a), \, V_{\mbox{corner}}=2V_{\pi/6}(d/2)+2V_{\mbox{rect}}(d/2,a)+2V_{\pi/3}(a),$
and the overall quantization error (giving a bound for the optimal quantization error) is a sum of these, counting the number of points of each type
\begin{eqnarray}
V_n&\leq&\frac{(N-3)(N-2)}{2}V_{\mbox{center}}+3(N-2)V_{\mbox{edge}}+3V_{\mbox{corner}}\nonumber\\
&=&\frac{144\sqrt{3}a^4N(N-2)+144a^3N(N-2)L+144\sqrt{3}a^2L^2-84aL^3+5\sqrt{3}L^4}{144(N-1)^2}.\label{e:Vn}
\end{eqnarray}
Expanding for large $N$ and $L$, keeping both quantities at the same order, gives to leading order the optimal
\[ a_{\mbox{opt}}=\frac{\sqrt{7}L}{6N} \]
which, substituted into the expression (\ref{e:Vn}) gives the stated result.
\end{proof}

In the general case (arbitrary $n$) we have an asymptotic result:
\begin{cor} The quantization error satisfies
\[ V_n\leq  \frac{5}{72n} +O(n^{-3/2}) \]
as $n\to\infty$.
\end{cor}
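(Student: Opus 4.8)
The plan is to deduce the general-$n$ asymptotic bound from Theorem~\ref{th:6.3} by a simple monotonicity-and-sandwiching argument: since $V_n$ is non-increasing in $n$, any $n$ can be bounded below by the nearest triangular number not exceeding it, to which the theorem applies. First I would, given $n$, let $N=N(n)$ be the largest integer with $T_N:=N(N+1)/2\le n$, so that $V_n\le V_{T_N}$ by monotonicity of the quantization error in the number of points (more points can only help). Then Theorem~\ref{th:6.3} gives $V_{T_N}\le \frac{5}{36N^2}+O(N^{-3})$. The remaining task is purely to re-express the right-hand side in terms of $n$ rather than $N$.

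The key estimate is the relation between $N$ and $n$. By maximality of $N$ we have $T_N\le n<T_{N+1}$, i.e. $\frac{N(N+1)}{2}\le n<\frac{(N+1)(N+2)}{2}$. From the lower bound, $N^2+N-2n\le 0$, so $N\le \frac{-1+\sqrt{1+8n}}{2}$; from the upper bound a matching lower estimate $N>\frac{-3+\sqrt{1+8n}}{2}$ follows. Hence $N=\sqrt{2n}+O(1)$, and more precisely $N^{-2}=\frac{1}{2n}+O(n^{-3/2})$: indeed $\frac{1}{N^2}=\frac{1}{2n}\cdot\frac{2n}{N^2}$ and $\frac{2n}{N^2}=1+O(N^{-1})=1+O(n^{-1/2})$. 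Substituting, $\frac{5}{36N^2}=\frac{5}{36}\cdot\frac{1}{2n}\big(1+O(n^{-1/2})\big)=\frac{5}{72n}+O(n^{-3/2})$, while the lower-order term $O(N^{-3})=O(n^{-3/2})$ is absorbed. Combining with $V_n\le V_{T_N}$ yields $V_n\le \frac{5}{72n}+O(n^{-3/2})$, as claimed.

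I do not expect a genuine obstacle here; the only point requiring a word of justification is the monotonicity $V_{n+1}\le V_n$, which is immediate from the definition of $V_n$ as an infimum over sets of cardinality at most $n$ (the feasible set only grows with $n$). One should also note that Theorem~\ref{th:6.3} is stated for $N\ge 3$, which is harmless since we are taking $n\to\infty$ and hence $N\to\infty$. If one wanted an explicit constant in the error term rather than $O(n^{-3/2})$, one would track the $O(1)$ in $N=\sqrt{2n}+O(1)$ through the expansion, but for the stated asymptotic form this is unnecessary.
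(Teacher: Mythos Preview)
Your proof is correct and follows essentially the same approach as the paper: the paper's argument is that for arbitrary $n$ one can add the extra points beyond the previous triangular number (monotonicity), and since the gap to that triangular number is $O(\sqrt{n})$ the leading term is unaffected. You have simply fleshed out this sketch with the explicit sandwiching $T_N\le n<T_{N+1}$ and the conversion $N^{-2}=\tfrac{1}{2n}+O(n^{-3/2})$.
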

\begin{proof}
This follows from Theorem~\ref{th:6.3}. For arbitary $n$, the distance to the previous triangular number is order $\sqrt{n}$. Thus we can add the extra points without increasing the leading term of the quantization error.
\end{proof}

We expect that the triangular lattice is optimal to leading order, so that $\leq$ may be replaced by $\sim$.
Furthermore, by placing a triangular lattice within a more general domain, we expect
\begin{conj}
If we consider a measure $P$ uniform on a domain with finite area $A$ and finite perimeter, then as $n\to\infty$,
\[ V_n \sim \frac{5\sqrt{3}A}{54n}. \]
\end{conj}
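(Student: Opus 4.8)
The plan is to prove matching upper and lower asymptotics for $nV_n$, the common value being $\frac{5\sqrt3}{54}A$, which is exactly the second moment about its centroid of a regular hexagon of unit area; thus the conjecture asserts that a triangular point lattice (Voronoi cells regular hexagons) is asymptotically optimal for the uniform law on \emph{any} finite-area, finite-perimeter domain $\Omega$, not merely on lattice-adapted ones. The claim is consistent with Theorem~\ref{th:6.3}: for the unit equilateral triangle $A=\frac{\sqrt3}4$ and $\frac{5\sqrt3 A}{54n}=\frac5{72n}$, the leading term obtained there. Structurally this is a Zador-type theorem (see \cite{GL1}) combined with the two-dimensional ``hexagon theorem'' of Fejes T\'oth.

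For the upper bound, rescale the triangular lattice so its hexagonal cells have area $A/n$ and let $\alpha_n$ be the lattice points whose closed cell meets $\Omega$. Since each cell has diameter $O(n^{-1/2})$ and $\Omega$ has finite perimeter, $\#\alpha_n=n+O(\sqrt n)$, and deleting $O(\sqrt n)$ boundary points gives a valid $n$-point competitor. The interior cells (those inside $\Omega$) contribute, via the explicit hexagon integral of the proof of Theorem~\ref{th:6.3} with complete hexagons in place of the corner and edge pieces, a total $\frac{5\sqrt3 A}{54n}(1+o(1))$; each remaining boundary cell contributes $O(n^{-2})$ to $V_n$ after the $\frac1A$ weighting, hence $O(n^{-3/2})$ in all, with constant proportional to the perimeter. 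This gives $\limsup_n nV_n\le\frac{5\sqrt3}{54}A$.

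For the lower bound, fix any $\alpha$ with $\#\alpha=n$, take its Voronoi partition, and let $S_a,k_a$ be the area and number of straight sides of $M(a|\alpha)\cap\Omega$. Three ingredients are needed: (i) Fejes T\'oth's moment inequality $\int_{M(a|\alpha)\cap\Omega}\|x-a\|^2\,dx\ge\psi_{k_a}(S_a)$, where $\psi_k(S)=c_kS^2$ is the minimal second moment about a point of a convex $k$-gon of area $S$ and $c_k$ is the normalized inertia of the regular $k$-gon, decreasing in $k$ with $c_6=\frac{5\sqrt3}{54}$; (ii) $\sum_aS_a=A$; and (iii) $\sum_ak_a\le 6n$ by Euler's formula for the planar subdivision. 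Since $k\mapsto 1/c_k$ is concave and nondecreasing, a Lagrange-multiplier/convexity argument shows $\sum_a\psi_{k_a}(S_a)\ge n\psi_6(A/n)(1-o(1))=\frac{5\sqrt3}{54}\frac{A^2}{n}(1-o(1))$ under (ii)--(iii); dividing by $A$ yields $\liminf_n nV_n\ge\frac{5\sqrt3}{54}A$, and with the upper bound, $V_n\sim\frac{5\sqrt3 A}{54n}$.

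The main obstacle is making (i)--(iii) rigorous for an arbitrary domain. A Voronoi cell of an optimal configuration need not be convex, may have far more than six sides --- and since a regular $k$-gon with $k$ large has \emph{smaller} normalized second moment than the hexagon, the per-cell bound must be used only in concert with the Euler average --- and near $\partial\Omega$ the cells are truncated by the boundary and are generally not polygons. Handling the $O(\sqrt n)$ boundary cells (via the crude $\int_{\mathrm{cell}}\|x-a\|^2\ge\frac{S_a^2}{2\pi}$, valid for any convex set, which suffices because there are so few of them) requires real control on $\partial\Omega$; ``finite perimeter'' in the measure-theoretic sense sits right at the threshold of what this method can absorb, which is why the statement is posed as a conjecture. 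For $\Omega$ convex, or with piecewise-smooth boundary, the scheme above should be completable to a full proof.
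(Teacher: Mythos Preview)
The statement you are addressing is labelled a \emph{conjecture} in the paper; the authors offer no proof, only the heuristic remark ``by placing a triangular lattice within a more general domain, we expect\ldots''.  There is therefore no paper proof to compare your attempt against.

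Your sketch is the classical route: the upper bound via a triangular lattice with hexagonal Voronoi cells of area $A/n$, controlling the $O(\sqrt{n})$ boundary cells using the finite-perimeter hypothesis; the lower bound via Fejes~T\'oth's moment lemma for convex polygons combined with the Euler bound $\sum_a k_a\le 6n$ on the total number of Voronoi edges.  This is exactly the strategy behind the two-dimensional case of Zador's theorem together with the identification of the quantization coefficient $C_2=\tfrac{5\sqrt3}{54}$ (Fejes~T\'oth for convex partitions; Newman and, independently, Gersho for the general planar case).  In fact the result, at least for bounded open $\Omega$ (and far more generally for absolutely continuous $P$), is already a theorem in the reference \cite{GL1} that the paper cites; so what the authors pose as a conjecture is essentially known.

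Your proposal is honest about where the argument is incomplete.  Two points deserve emphasis.  First, in the upper bound, once you delete $O(\sqrt n)$ lattice points the Voronoi cells of the survivors change, so you cannot simply keep the interior-cell estimate; the clean fix is to use the monotonicity $V_n\le V_{n-O(\sqrt n)}$ and run the hexagon bound with $n'=n-O(\sqrt n)$ cells of area $A/n'$, which costs only a $(1+o(1))$ factor.  Second, in the lower bound the per-cell inequality $\int_{M(a|\alpha)\cap\Omega}\|x-a\|^2\,dx\ge c_{k_a}S_a^2$ requires convexity of the truncated cell; for a general finite-perimeter $\Omega$ this fails, and your own caveat that ``finite perimeter \ldots\ sits right at the threshold'' is well placed.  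For convex or piecewise-smooth $\Omega$ the boundary cells are $O(\sqrt n)$ in number, each has area $O(1/n)$, and the crude bound $\int\|x-a\|^2\ge S_a^2/(2\pi)$ you mention suffices to make their total contribution $o(1/n)$, after which the Fejes~T\'oth--Euler argument on the interior cells goes through.  With these adjustments your outline is correct and matches the established literature; it genuinely goes beyond what the paper itself provides.
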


\end{document}